\def\eqref#1{equation~\ref{#1}}
\def\1{\bm{1}}
\DeclareMathAlphabet{\mathsfit}{\encodingdefault}{\sfdefault}{m}{sl}
\SetMathAlphabet{\mathsfit}{bold}{\encodingdefault}{\sfdefault}{bx}{n}
\newtheorem{theorem}{Theorem}
\newtheorem{corollary}{Corollary}
\title{Idempotence and Perceptual Image Compression}
\author{Tongda Xu$^{1,2}$, Ziran Zhu$^{1,3}$, Dailan He$^{4,5}$, Yanghao Li$^{1,2}$, Lina Guo$^4$, Yuanyuan Wang$^4$  \\
$^1$Institute for AI Industry Research, Tsinghua University\\
$^2$Department of Computer Science and Technology, Tsinghua University \\
\AND Zhe Wang$^{1,2}$, Hongwei Qin$^4$, Yan Wang$^{1*}$, Jingjing Liu$^{1,6}$ \& Ya-Qin Zhang$^{1,2,6}$\thanks{To whom the correspondence should be addressed.} \\ 
$^3$Institute of Software, Chinese Academy of Sciences, $^4$SenseTime Research \\
$^5$The Chinese University of Hong Kong, $^6$School of Vehicle and Mobility, Tsinghua University\\
\texttt{x.tongda@nyu.edu, wangyan@air.tsinghua.edu.cn}
}
\begin{document}

\maketitle

\begin{abstract}
    Idempotence is the stability of image codec to re-compression. At the first glance, it is unrelated to perceptual image compression. However, we find that theoretically: 1) Conditional generative model-based perceptual codec satisfies idempotence; 2) Unconditional generative model with idempotence constraint is equivalent to conditional generative codec. Based on this newfound equivalence, we propose a new paradigm of perceptual image codec by inverting unconditional generative model with idempotence constraints. Our codec is theoretically equivalent to conditional generative codec, and it does not require training new models. Instead, it only requires a pre-trained mean-square-error codec and unconditional generative model. Empirically, we show that our proposed approach outperforms state-of-the-art methods such as HiFiC \citep{Mentzer2020HighFidelityGI} and ILLM \citep{muckley2023improving}, in terms of Fréchet Inception Distance (FID). The source code is provided in \url{https://github.com/tongdaxu/Idempotence-and-Perceptual-Image-Compression}.
\end{abstract}
\begin{figure}[thb]
\centering
    \includegraphics[width=\linewidth]{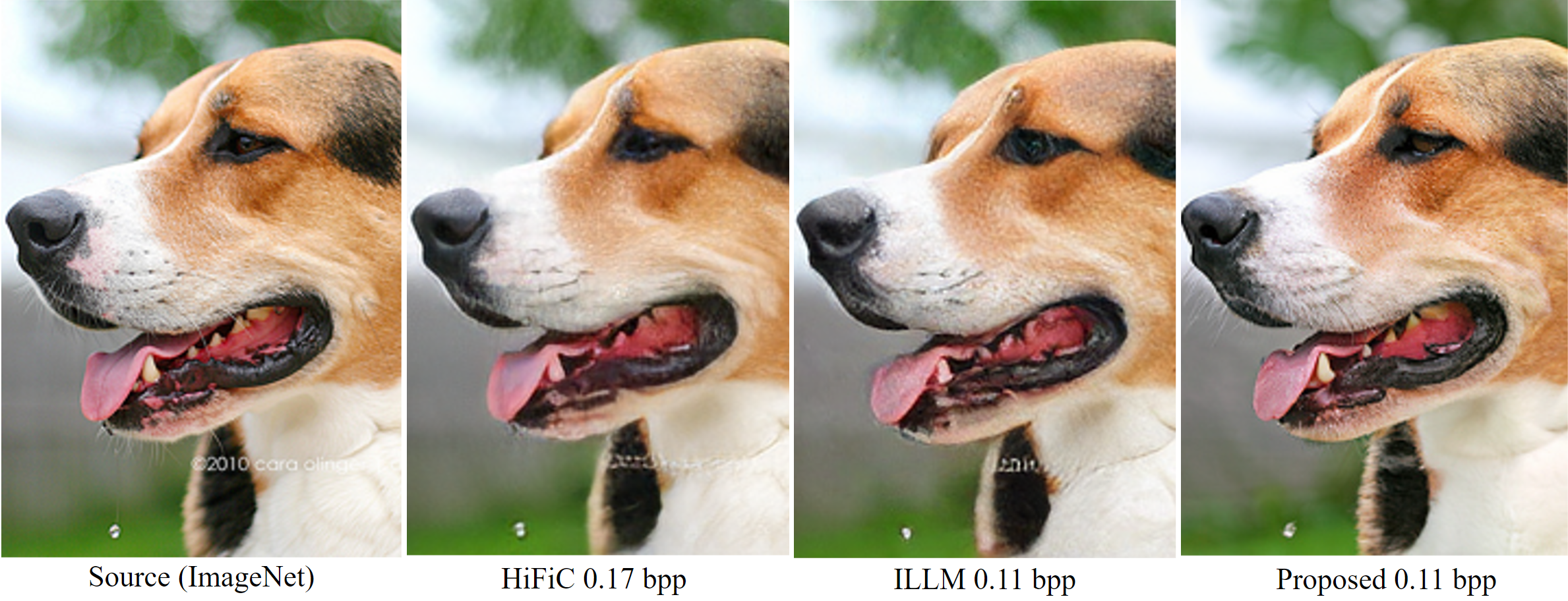}
\caption{A visual comparison of our proposed approach with state-of-the-art perceptual image codec, such as HiFiC \citep{Mentzer2020HighFidelityGI} and ILLM \citep{muckley2023improving}.}
\label{fig:cover}
\end{figure}

\section{Introduction}
Idempotence refers to the stability of image code to re-compression, which is crucial to practical image codec. For traditional codec standard (e.g., JPEG \citep{wallace1991jpeg}, JPEG2000 \citep{taubman2002jpeg2000}, JPEG-XL \citep{alakuijala2019jpeg}), idempotence is already taken into consideration. For neural image compression (NIC), by default idempotence is not considered. Thus, specific methods such as invertible network \citep{helminger2021lossy,cai2022high,li2023idempotent} and regularization loss \citep{kim2020instability} have been proposed to improve the idempotence of NIC methods.
%\jj{zero-loss?} 

In the meantime, there has been growing success in perceptual image compression. Many recent studies achieve perceptual near-lossless result with very low bitrate \citep{Mentzer2020HighFidelityGI,muckley2023improving}. The majority of perceptual image compression methods adopt a conditional generative model \citep{tschannen2018deep, Mentzer2020HighFidelityGI,he2022po,Agustsson2022MultiRealismIC,hoogeboom2023high,muckley2023improving}. More specifically, they train a decoder that learns the posterior of a natural image conditioned on the bitstream. This conditional generative model-based approach is later theoretically justified by \citet{blau2019rethinking, Yan2021OnPL}, who show that such approach achieves perfect perceptual quality and is optimal in terms of rate-distortion-perception trade-off when the encoder is deterministic.

At the first glance, idempotence and perceptual image compression are unrelated topics. Indeed, researchers in those two areas barely cite each other. However, we find that idempotence and perceptual image compression are, in fact, closely related. More specifically, we demonstrate that: 1) Perceptual image compression with conditional generative model brings idempotence; 2) Unconditional generative model with idempotence constraints brings perceptual image compression. Inspired by the latter, we propose a new paradigm of perceptual image compression, by inverting an unconditional generative model with idempotence constraint. Compared with previous conditional generative codec, our approach requires only a pre-trained unconditional generative model and mean-square-error (MSE) codec. Furthermore, extensive experiments empirically show that our proposed approach achieves state-of-the-art perceptual quality.

% To wrap up, our contributions are as follows
% \begin{itemize}
%     \item We connect the idempotence and perceptual image compression. We show that conditional generative codec are idempotence, and unconditional generative model with idempotence constrain is equivalent to conditional generative codec.
%     \item Based on this equivalence, we propose a new paradigm of perceptual image codec. It does not requires training a conditional generative model. Instead, it only requires a pre-trained mean-square-error codec and a unconditional generative model.
%     \item We empirically show that our proposed perceptual codec outperforms previous state-of-the-art approaches, such as HiFiC \citep{Mentzer2020HighFidelityGI} and ILLM \citep{muckley2023improving}, in terms of FID. 
% \end{itemize}
\section{Preliminaries}
\label{sec:pre}

\subsection{Idempotence of Image Codec}
Idempotence of image codec refers to the codec's stability to re-compression. More specifically, denote the original image as $X$, the encoder as $f(.)$, the code as $Y=f(X)$, the decoder as $g(.)$ and the reconstruction as $\hat{X} = g(Y)$. We say that the codec is idempotent if  
\begin{gather}
    f(\hat{X}) = Y \textrm{, or } g(f(\hat{X})) = \hat{X},\label{eq:idem}
\end{gather}
i.e., the codec is idempotent if the re-compression of reconstruction $\hat{X}$ produces the same result.
\subsection{Perceptual Image Compression}
In this paper, we use \citet{blau2018perception}'s definition of perceptual quality. More specifically, we say that the reconstructed image $\hat{X}$ has perfect perceptual quality if
\begin{gather}
    p_{\hat{X}} = p_{X},
\end{gather}
where $p_X$ is the source image distribution and $p_{\hat{X}}$ is the reconstruction image distribution. In this paper, we slightly abuse the word "perception" for this divergence-based perception, and use the word "human perception" for human's perception instead.

The majority of perceptual image codec achieves perceptual quality by conditional generative model \citep{tschannen2018deep, Mentzer2020HighFidelityGI,he2022po,Agustsson2022MultiRealismIC,hoogeboom2023high}. More specifically, they train a conditional generative model (such as conditional generative adversial network) to approximate the real image's posterior on the bitstream. And the reconstruction image $\hat{X}$ is obtained by sampling the posterior:
\begin{gather}
    \hat{X} = g(Y) \sim p_{X|Y} \textrm{, where } Y = f(X).
\end{gather}
\citet{blau2019rethinking} prove that this conditional generative codec achieves perfect perceptual quality $p_{\hat{X}} = p_X$. Further, when $f(.)$ achieves the optimal MSE $\Delta^*$, then the MSE of perceptual reconstruction is bounded by twice of the optimal MSE:
\begin{gather}
    \mathbb{E}[||X - \hat{X}||^2] \le 2\Delta^*,
\end{gather}
Moreover, \citet{Yan2021OnPL} further justify conditional generative codec by proving that it is optimal in terms of rate-distortion-perception \citep{blau2019rethinking} among deterministic encoders.
\begin{figure}[thb]
\centering
    \includegraphics[width=0.6\linewidth]{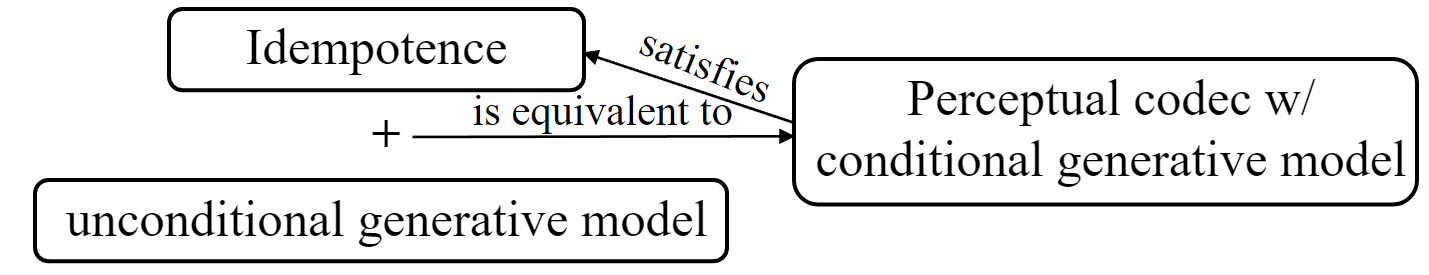}
\caption{ The relationship between idempotence and perceptual image compression.}  \label{fig:01}
\end{figure}
\section{Idempotence and Perceptual Image Compression}
In this section, we connect idempotence and perceptual image compression. Specifically, we show that ideal conditional generative codec is idempotent. On the other hand, sampling from unconditional generative model with ideal idempotence constraint leads to conditional generative codec. Their relationship is shown in Fig.~\ref{fig:01}.

\subsection{Perceptual Image Compression Brings Idempotence}
We first show that ideal conditional generative codec satisfies idempotence, i.e, $\hat{X}\sim P_{X|Y} \Rightarrow f(\hat{X})=Y$. Given a specific value $y$ with non-trivial probability $p_Y(Y=y)\neq 0$, we define the inverse image of $y$ as $f^{-1}[y] = \{x|f(x) = y\}$. By definition, all the elements $x\in f^{-1}[y]$ encode into $y$. Then if we can show $\hat{X} \in f^{-1}[y]$, we can show that this perceptual codec is idempotent.

To show that $\hat{X}\in f^{-1}[y]$, let's consider a specific value $x$ with non-trivial probability $p_{X}(X=x)\neq 0$. As $Y=f(X)$ is a deterministic transform, when $x\notin f^{-1}[y]$, we have the likelihood $p_{Y|X}(Y=y|X=x) = 0$. Therefore, for any $x\notin f^{-1}[y]$, we have $p_{XY}(X=x,Y=y) = p_{Y|X}(Y=y|X=x)p_X(X=x) = 0$. As $p_{X|Y} = p_{XY}/p_Y$, for any $x\notin f^{-1}[y]$, we have the posterior $p_{X|Y}(X=x|Y=y) = 0$. And therefore, for any sample $\hat{X}\sim p_{X|Y=y}$, we have $\textrm{Pr}\{\hat{X} \in f^{-1}[y]\} = 1$, i.e., $\hat{X}\in f^{-1}[y]$ almost surely. By definition of $f^{-1}[y]$, this codec is idempotent. We summarize the above discussion as follows:
\begin{theorem}    
    \label{thm:01} (Perceptual quality brings idempotence)
    Denote $X$ as source, $f(.)$ as encoder, $Y = f(X)$ as bitstream, $g(.)$ as decoder and $\hat{X} = g(Y)$ as reconstruction. When encoder $f(.)$ is deterministic, then conditional generative model-based image codec is also idempotent, i.e.,
    \begin{gather}
        \hat{X} = g(Y) \sim p_{X|Y} \Rightarrow f(\hat{X}) \overset{a.s.}{=} Y.
    \end{gather}    
\end{theorem}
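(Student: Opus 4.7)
The plan is to show that the posterior $p_{X|Y=y}$ is concentrated on the fiber $f^{-1}[y]=\{x:f(x)=y\}$, so any sample $\hat{X}\sim p_{X|Y=y}$ lies in $f^{-1}[y]$ and therefore satisfies $f(\hat{X})=y$ by the very definition of the fiber. Once this support property is established for $P_Y$-almost every $y$, the theorem follows by integrating out $Y$, i.e., $\Pr\{f(\hat{X})=Y\}=\mathbb{E}_Y\,\Pr\{f(\hat{X})=Y\mid Y\}=1$, which is exactly $f(\hat{X})\stackrel{a.s.}{=}Y$.

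For the support property I would give a direct Bayes-rule argument. Because $f$ is deterministic, the conditional law of $Y$ given $X=x$ is a point mass at $f(x)$, so the likelihood satisfies $p_{Y|X}(y\mid x)=0$ for every $x\notin f^{-1}[y]$. Writing the joint as $p_{X,Y}(x,y)=p_{Y|X}(y\mid x)\,p_X(x)$ and dividing by the marginal $p_Y(y)$ (valid on the set where $p_Y(y)>0$, which has full $P_Y$-measure) yields $p_{X|Y}(x\mid y)=0$ for every $x\notin f^{-1}[y]$. Hence the posterior assigns zero mass to the complement of $f^{-1}[y]$, which is precisely what is required.

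The main subtlety is that a purely density-based formulation is fragile: since $Y=f(X)$ is a deterministic function of $X$, the joint law of $(X,Y)$ is singular with respect to any product reference measure on the ambient space, so $p_{X,Y}$ and $p_{Y|X}$ do not genuinely exist as densities. The cleanest workaround is to replace densities by regular conditional probabilities and invoke the standard fact that for a Borel map $f$, any regular version of $P_{X|Y=y}$ is supported on $f^{-1}[y]$ for $P_Y$-a.e.\ $y$; this is a one-line consequence of $\sigma(Y)\subseteq\sigma(X)$ applied to the indicator $\mathbf{1}[f(X)\neq Y]$. An even slicker route sidesteps conditioning altogether: $\hat{X}\mid Y=y$ has the same law as $X\mid Y=y$ by construction, and on the event $\{Y=y\}$ we have $f(X)=Y=y$ deterministically, so $f(\hat{X})=y$ holds almost surely conditionally on $Y=y$; integrating over $y$ finishes the proof. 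I expect the Bayes-rule presentation to be preferred in the paper's exposition, since it parallels the informal density calculation, but the measure-theoretic reformulation is what actually underwrites rigour.
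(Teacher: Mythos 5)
Your proposal is correct and follows essentially the same route as the paper's proof: a Bayes-rule computation showing the likelihood $p_{Y|X}(y\mid x)$ vanishes off the fiber $f^{-1}[y]$, hence the posterior is supported on $f^{-1}[y]$ and any sample satisfies $f(\hat{X})=y$ almost surely. Your measure-theoretic caveat about densities is well taken but moot here, since the paper explicitly restricts to discrete (countable-alphabet) $X$, where the mass-function version of the argument is already rigorous.
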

\subsection{Idempotence Brings Perceptual Image Compression}
In previous section, we have shown that perceptual quality brings idempotence. It is obvious that the simple converse is not true. A counter-example is JPEG2000 \citep{taubman2002jpeg2000}, which is idempotent by design but optimized for MSE. However, we show that with unconditional generative model, idempotence does bring perceptual quality.

More specifically, we want to show that sampling from unconditional distribution $\hat{X}\sim p_X$ with idempotence constraint $f(\hat{X}) = Y$, is equivalent to sampling from the posterior $p_{X|Y}$. Again, we consider a non-trivial $y$ with $p_Y(Y=y)\neq 0$. Similar to previous section, as $f(.)$ is deterministic, we have $p_{Y|X}(Y=y|X=x) = 1$ if $x\in f^{-1}[y]$, and $p_{Y|X}(Y=y|X=x) = 0$ if $x\notin f^{-1}[y]$. Then, we can compute the posterior distribution as
\begin{align}    
    p_{X|Y}(X=x|Y=y) &\propto p_X(X=x)p_{Y|X}(Y=y|X=x) \notag \\ 
     &\propto \left\{
    \begin{array}{ll}
     p_X(X=x), & x \in f^{-1}[y], \\
     0, & x \notin f^{-1}[y].
    \end{array} 
    \right.
\end{align}
The above equation shows that when $x\notin f^{-1}[y]$, the posterior likelihood $p_{X|Y}(X=x|Y=y) = 0$, and no sample with value $x$ can be generated almost surely. And when $x\in f^{-1}[y]$, the posterior likelihood $p_{X|Y}(X=x|Y=y)$ is proportional to the source distribution $p_X(X=x)$. Therefore, sampling from the source $p_X$ with idempotence constraint $f(X)=Y$ is equivalent to sampling from the posterior $p_{X|Y}$. We summarize the above discussion as follows:
\begin{theorem}    
    \label{thm:02} (Idempotence brings perceptual quality)
    Denote $X$ as source, $f(.)$ as encoder, $Y = f(X)$ as bitstream, $g(.)$ as decoder and $\hat{X} = g(Y)$ as reconstruction. When encoder $f(.)$ is deterministic, the unconditional generative model with idempotence constraint is equivalent to the conditional generative model-based image codec, i.e.,
    \begin{gather}
    \hat{X}\sim p_{X}\textrm{, s.t. } f(\hat{X}) = Y \Rightarrow \hat{X} \sim p_{X|Y}.\label{eq:thm02}
    \end{gather}    
\end{theorem}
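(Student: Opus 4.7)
The plan is to compute the posterior $p_{X|Y}(X=x\mid Y=y)$ in closed form via Bayes' rule, exploit the determinism of $f$ to collapse the likelihood into an indicator over the preimage $f^{-1}[y]$, and then observe that the resulting density matches, term for term, the density of the unconditional sampler $\hat{X}\sim p_X$ restricted to the event $\{f(\hat{X})=y\}$. Throughout I fix $y$ with $p_Y(Y=y)\neq 0$, as the theorem assumes.

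The first step is Bayes' rule:
\[
p_{X|Y}(X=x\mid Y=y) \;=\; \frac{p_X(X=x)\,p_{Y|X}(Y=y\mid X=x)}{p_Y(Y=y)}.
\]
Because $Y=f(X)$ is deterministic, $p_{Y|X}(Y=y\mid X=x)$ equals $1$ when $x\in f^{-1}[y]$ and $0$ otherwise. Substituting yields $p_{X|Y}(X=x\mid Y=y) \propto p_X(X=x)\,\mathbf{1}[x\in f^{-1}[y]]$, with normalizer $p_Y(Y=y)$. The second step is to identify this with the constrained sampler: the event $\{f(\hat{X})=Y\}$ is precisely $\{\hat{X}\in f^{-1}[y]\}$, so ``draw $\hat{X}\sim p_X$ subject to $f(\hat{X})=Y$'' is, by definition, the law of $X$ conditioned on $X\in f^{-1}[y]$, whose density is again proportional to $p_X(x)\,\mathbf{1}[x\in f^{-1}[y]]$ with the same normalizer. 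Matching the two expressions delivers the claimed equivalence. Theorem~\ref{thm:01} supplies the converse direction automatically, closing the loop between the two samplers.

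The hard part is not algebraic but measure-theoretic. For continuous sources the event $\{f(\hat{X})=Y\}$ typically has Lebesgue measure zero, so ``conditioning on the constraint'' must be interpreted through a regular conditional probability, or equivalently a disintegration of $p_X$ along the fibres of $f$. Under the standing assumption that $Y$ has positive mass at $y$ (the natural setting for a codec, whose bitstream is discrete), such a disintegration exists and coincides with the renormalized restriction used above, so no additional hypotheses beyond those already stated are needed. Once this identification is in place, every remaining step is elementary manipulation of densities.
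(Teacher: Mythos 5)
Your proposal is correct and follows essentially the same route as the paper's own proof: apply Bayes' rule, use determinism of $f$ to collapse the likelihood $p_{Y|X}$ into an indicator on $f^{-1}[y]$, and identify the resulting renormalized restriction of $p_X$ with the constrained sampler. Your explicit identification of the common normalizer $p_Y(Y=y)$ and the remark on disintegration in the continuous case are slightly more careful than the paper's treatment (which works with discrete alphabets throughout), but they do not change the argument.
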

And more conveniently, the unconditional generative model with idempotence constraint also satisfies the theoretical results from rate-distortion-trade-off \citep{blau2019rethinking,Yan2021OnPL}: 
\begin{corollary}
    If $f(.)$ is the encoder of a codec with optimal MSE $\Delta^*$, then the unconditional generative model with idempotence constraint also satisfies
    \begin{gather}
        p_{\hat{X}} = p_X \textrm{, } \mathbb{E}[||X-\hat{X}||^2] \le 2 \Delta^*.
    \end{gather}
    Furthermore, the codec induced by this approach is also optimal among deterministic encoders.
\end{corollary}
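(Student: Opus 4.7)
The plan is to treat this corollary as an immediate transfer of known results via the equivalence established in Theorem~\ref{thm:02}. Since Theorem~\ref{thm:02} shows that any $\hat{X}$ produced by sampling $p_X$ subject to $f(\hat{X}) = Y$ has the same distribution as a sample from $p_{X|Y}$, the reconstruction is (in distribution) identical to what a conditional generative codec would output. Therefore every distributional property already proved for conditional generative codecs carries over verbatim, and the proof reduces to citing the right lemmas.

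First, I would establish perfect perceptual quality $p_{\hat X} = p_X$. Using Theorem~\ref{thm:02}, $\hat{X} \mid Y \sim p_{X|Y}$, so marginalizing over $Y$ gives $p_{\hat X}(x) = \int p_{X|Y}(x \mid y)\, p_Y(y)\, dy = p_X(x)$. This is exactly the perceptual guarantee of \citet{blau2019rethinking} for conditional generative codecs, and follows here with no extra work.

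Second, I would establish the MSE bound $\mathbb{E}[\|X - \hat X\|^2] \le 2\Delta^*$. The argument from \citet{blau2019rethinking} goes through unchanged: let $X^* = \mathbb{E}[X \mid Y]$ be the MMSE reconstruction from the same encoder, which attains $\mathbb{E}[\|X - X^*\|^2] = \Delta^*$ by hypothesis. Since $\hat X \sim p_{X \mid Y}$, conditional on $Y$ the variables $X$ and $\hat X$ are i.i.d.\ from $p_{X \mid Y}$, so $\mathbb{E}[\|X - \hat X\|^2 \mid Y] = 2\,\mathbb{E}[\|X - X^*\|^2 \mid Y]$; taking expectations yields the factor of two. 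The optimality statement among deterministic encoders is then obtained by invoking \citet{Yan2021OnPL} applied to the induced codec $(f, g)$ where $g(Y)$ is defined as drawing from $p_X$ conditioned on $f(\hat X) = Y$; by Theorem~\ref{thm:02}, this decoder realizes the posterior sampler, which \citet{Yan2021OnPL} show is rate-distortion-perception optimal in the deterministic-encoder regime.

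The only real subtlety, and where I would be most careful, is verifying that the equivalence in Theorem~\ref{thm:02} is strong enough to import these results in the present setting: Theorem~\ref{thm:02} is stated pointwise for non-trivial $y$, so I would make explicit that the induced joint law of $(X, Y, \hat X)$ matches the joint law under conditional generative coding (not merely the marginal of $\hat X$), since both the MSE bound and the rate-distortion-perception optimality are properties of this joint distribution. Once this joint-law identification is made, the remainder is a one-line appeal to the cited results and no new calculation is required.
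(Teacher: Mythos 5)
Your proof takes essentially the same route as the paper's: invoke Theorem~\ref{thm:02} to identify the constrained unconditional sampler with the posterior sampler of a conditional generative codec, then import the perceptual-quality and $2\Delta^*$ bounds from \citet{blau2019rethinking} and the deterministic-encoder optimality from \citet{Yan2021OnPL}. You actually spell out more detail than the paper does (the marginalization over $Y$, the conditionally i.i.d.\ decomposition giving the factor of two, and the need to match joint rather than marginal laws), all of which is correct; no gap.
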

Besides, those results can be extended to image restorations (See Appendix.~\ref{app:pf}).
\section{Perceptual Image Compression by Inversion}
\subsection{General Idea}
Theorem ~\ref{thm:02} implies a new paradigm to achieve perceptual image compression by sampling from a pre-trained unconditional generative model with the idempotence constraint. More specifically, we can rewrite the left hand side of Eq.~\ref{eq:thm02} in Theorem~\ref{thm:02} as
\begin{gather}
    \min||f(\hat{X}) - Y||^2 \textrm{, s.t. } \hat{X} \sim p_X. \label{eq:inv}
\end{gather}
Then the problem has exactly the same form as a broad family of works named `generative model inversion' for image super-resolution and other restoration tasks \citep{Menon2020PULSESP,daras2021intermediate,wang2022zero,chung2022diffusion}. From their perspective, $f(.)$ is the down-sample operator, $Y$ is the down-sampled image and $||f(\hat{X}) - Y||^2$ is the `consistency' penalization that secures the super-resolved image $\hat{X}$ corresponds to the input down-sampled image. The $\hat{X}\sim p_X$ is the `realness' term, and it ensures that $\hat{X}$ lies on the natural image manifold \citep{zhu2016generative}. And solving Eq.~\ref{eq:inv} is the same as finding a sample that satisfies the consistency, which is the inverse problem of sampling. Therefore, they name their approach `inversion of generative model'. For us, the $f(.)$ operator is the encoder, and $Y$ is the bitstream. And so long as our encoder is differentiable, we can adopt their inversion approach for image super-resolution to inverse the codec.

\subsection{Encode and Decode Procedure}

To better understand our proposed approach, we describe the detailed procedure of communicating an image from sender to receiver and implementation details. We start with the pre-trained model that is required for the sender and receiver:
\begin{itemize}
    \item The sender and receiver share a MSE optimized codec with encoder $f_0(.)$, decoder $g_0(.)$. Despite we can use any base codec, MSE-optimized codec leads to tightest MSE bound.
    \item The receiver has a unconditional generative model $q_X$ approximating the source $p_X$.
\end{itemize}

And the full procedure of encoding and decoding an image with our codec is as follows:
\begin{itemize}
    \item The sender samples an image from the source $X\sim p_X$.
    \item The sender encodes the image into bitstream $Y = f_0(X)$, with the encoder of pre-trained MSE codec. And $Y$ is transmitted from sender to receiver.
    \item Upon receiving $Y$, the receiver inverses a generative model with idempotence constraint:
    \begin{gather}
        \min ||f_0(\hat{X}) - Y||^2\textrm{, s.t. } \hat{X}\sim q_X. \label{eq:invy}
    \end{gather}
\end{itemize}

For practical implementation, we can not directly use the binary bitstream $Y$ for idempotence constraint. This is because most of generative model inversion methods \citep{Menon2020PULSESP,daras2021intermediate,chung2022diffusion} use the gradient $\nabla_{\hat{X}}||f_0(\hat{X}) - Y||^2$. On the other hand, for most NIC approaches, the entropy coding with arithmetic coding \citep{rissanen1979arithmetic} is not differentiable. We propose two alternative constraints to this problem:
\begin{itemize}
    \item y-domain constraint: We do not constrain the actual bitstream $Y$. Instead, we constrain the quantized symbols before arithmetic coding. Further, we use straight-through-estimator to pass the gradient through quantization. And we call this constraint y-domain constraint.
    \item x-domain constraint: On the other hand, we can also decode the re-compressed image and constrain the difference between the MSE reconstructed image of source $X$ and the MSE reconstructed image of sample. More specifically, instead of solving Eq.~\ref{eq:invy}, we solve:
    \begin{gather}
        \min ||g_0(f_0(\hat{X})) - g_0(Y)||^2 \textrm{, s.t. } \hat{X}\sim q_X, \label{eq:invx}
    \end{gather}
    Similarly, the quantization is relaxed by STE to allow gradient to pass. And we call this constraint x-domain constraint.    
\end{itemize}

The y-domain and x-domain constraint correspond to two idempotence definitions in Eq.~\ref{eq:idem}. And when $g_0(.)$ is MSE optimal, those two constraints are equivalent (See Theorem~\ref{thm:03} of Appendix.~\ref{app:pf}). And beyond STE, other gradient estimators such as additive noise, multi-sample noise and stochastic gumbel annealing can also be adopted \citep{balle2017end, xu2022multi, yang2020improving}. 
\subsection{Comparison to Previous Work}

Compared with previous works using conditional generative model \citep{Mentzer2020HighFidelityGI,muckley2023improving}, our proposed approach does not require specific conditional generative model for different codec and bitrate. It only requires one unconditional generative model, and it can be directly applied on even pre-trained MSE codec. Furthermore, it is theoretically equivalent to conditional generative codec, which means that it conveniently shares the same theoretical properties in terms of rate-distortion-perception trade-off \citep{blau2019rethinking}.

Compared with previous proof of concept using unconditional generative model \citep{ho2020denoising,Theis2022LossyCW}, our proposed approach does not require bits-back coding \citep{townsend2018practical} or sample communication \citep{li2018strong}, which might have rate or complexity overhead. Further, it is implemented as a practical codec. Moreover, our approach uses exactly the same bitstream $Y$ as MSE codec. This means that the receiver can choose to reconstruct a MSE optimized image or a perceptual image and even achieve perception-distortion trade-off (See Appendix.~\ref{app:ar}).

\section{Experiments}
\subsection{Experiment Setup}
\textbf{Metrics} We use Fréchet Inception Distance (FID) to measure perceptual quality. This is because we use \citet{blau2018perception}'s definition of perceptual quality, which is the divergence between two image distributions. And FID is the most common choice for such purpose. We use MSE and Peak-Signal-Noise-Ratio (PSNR) to measure distortion, which are the default choice for image codec. We use bpp (bits per pixel) to measure bitrate. To compare codec with different bitrate, we adopt Bjontegaard (BD) metrics \citep{bjontegaard2001calculation}: BD-FID and BD-PSNR. Those BD-metrics can be seen as the average FID and PSNR difference between codec over their bitrate range.

\textbf{Datasets} Following previous works in unconditional image generation \citep{karras2019style,ho2020denoising}, we train our unconditional generative models on FFHQ \citep{karras2019style} and ImageNet \citep{deng2009imagenet} dataset. As \citep{chung2022diffusion}, we split the first 1000 images of FFHQ as test set and the rest for training. And we use first 1000 images of ImageNet validation split as test set and use the ImageNet training split for training. To test the generalization ability of our method on other datasets, we also use first 1000 images of COCO \citep{lin2014microsoft} validation split and CLIC 2020 \citep{toderici2020clic} test split as additional test set. As previous works \citep{karras2019style,ho2020denoising} in unconditional generative model, we central crop image by their short edge and rescale them to $256\times 256$. 

\textbf{Previous State-of-the-art Perceptual Codec} We compare our approach against previous state-of-the-art perceptual codec: HiFiC \citep{Mentzer2020HighFidelityGI}, Po-ELIC \citep{he2022po}, CDC \citep{yang2022lossy} and ILLM \citep{muckley2023improving}. HiFiC is the first codec achieving perceptual near lossless compression with very low bitrate. Po-ELIC is the winner of CVPR CLIC 2022, a major competition for perceptual codec. CDC is a very recent perceptual codec using conditional diffusion model. ILLM is the latest state-of-the-art perceptual codec. We use the official model to test HiFiC, CDC and ILLM. We contact the authors of Po-ELIC to obtain the test results on our datasets. For fairness of comparison, we also test a re-trained version of HiFiC \citep{Mentzer2020HighFidelityGI} and ILLM \citep{muckley2023improving} on FFHQ and ImageNet dataset. So that their training dataset becomes the same as ours. We acknowledge that there are other very competitive perceptual codec \citep{iwai2021fidelity,ma2021variable,Agustsson2022MultiRealismIC,goose2023neural,hoogeboom2023high}, while we have not included them for comparison as they are either unpublished yet or do not provide pre-trained model for testing.

\textbf{MSE Codec Baseline} As we have discussed, our approach requires a MSE optimized codec as base model. As we only requires the codec to be differentiable, most of works in NIC \citep{balle2018variational, minnen2018joint, minnen2020channel,cheng2020learned,he2022elic,liu2023learned} can be used.
Among those MSE codec, we choose two representative models: Hyper \citep{balle2018variational} and ELIC \citep{he2022elic}. Hyper is perhaps the most influential work in NIC. Its two level hyperprior structure inspires many later works. ELIC is the first NIC approach that outperforms state-of-the-art manual codec VTM \citep{bross2021overview} with practical speed. For reference, we also choose two traditional codec baseline: BPG and VTM \citep{bross2021overview}.
\begin{table}[thb]
\centering
\begin{tabular}{@{}llllllll@{}}
\toprule
 & Base & Inversion & Constraint & bpp & FID $\downarrow$  & MSE $\downarrow$  & MSE bound \\ \midrule
ELIC & - & - & - & \multirow{6}{*}{0.07} & 94.35 & 91.75 & - \\
\multirow{5}{*}{Proposed (ELIC)} & StyleGAN2 & PULSE & x-domain &  & 15.33  & \textcolor{red}{\textbf{754.2}} & \multirow{5}{*}{183.5} \\
 & StyleGAN2 & ILO & x-domain &  & 26.15 & \textcolor{red}{\textbf{689.2}} &  \\
 & DDPM & MCG & x-domain &  & 135.1 & \textcolor{red}{\textbf{929.2}} &  \\
 & DDPM & DPS & y-domain &  & 5.377 & 189.2 &  \\
 & DDPM & DPS & x-domain &  & \textbf{5.347} & 161.9 &  \\ \bottomrule
\end{tabular}
\caption{Ablation study with FFHQ and ELIC. \textbf{Bold}: best FID. \textcolor{red}{\textbf{Bold red}}: too large MSE.}
\label{tab:abl1}
\end{table}
\vspace{-2.0em}
\begin{figure}[thb]
\centering
    \includegraphics[width=\linewidth]{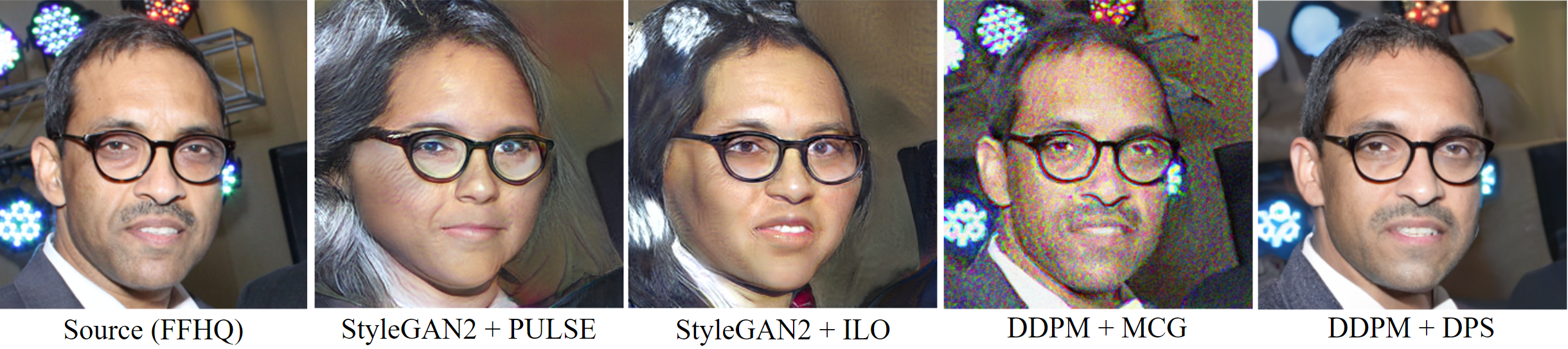}
\caption{Ablation study on unconditional generative model with FFHQ and ELIC.}
\label{fig:abl0}
\end{figure}
\vspace{-2.0em}
\subsection{Ablation Study}
\textbf{Unconditional Generative Model} As we have discussed, our approach requires an unconditional generative model for inversion. The most adopted models for inverse problem are StyleGAN family \citep{karras2019style} and DDPM \citep{ho2020denoising}. Multiple approaches are proposed to invert StyleGAN and DDPM. And they are shown to be effective for image super-resolution and other restoration tasks~\citep{Menon2020PULSESP,daras2021intermediate,roich2022pivotal,daras2022score,wang2022zero,chung2022improving,chung2022diffusion}. For StyleGAN inversion, we choose two mostly adopted approach: PULSE \citep{Menon2020PULSESP} and ILO \citep{daras2021intermediate}. For DDPM inversion, we choose MCG \citep{chung2022improving} and DPS \citep{chung2022diffusion} as they are the most popular methods for non-linear inversion.

To find the most suitable unconditional generative model, we compare StyleGAN2 + PULSE, StyleGAN2 + ILO, DDPM + MCG and DDPM + DPS. The implementation details are presented in Appendix.~\ref{app:setup}. As Tab.~\ref{tab:abl1} and Fig.~\ref{fig:abl0} show, compared with DDPM + DPS, other three methods either have too high FID or have MSE significantly larger than $2\times$MSE of ELIC. Further, they look visually less desirable and less similar to the source. Therefore, we use DDPM + DPS inversion in later experiments.

\textbf{Domain of Idempotence Constraint}
As we have discussed, there are two kinds of constraint for idempotence, namely y-domain and x-domain. In theory, those two constraints are equivalent (See Theorem~\ref{thm:03} of Appendix~\ref{app:pf}). While in practice, they can be different as the strict idempotence condition $f_0(\hat{X}) = Y$ might not be achieved. We compare the y-domain and x-domain in Tab.~\ref{tab:abl1}. Those two constraints achieve similar FID while x-domain has lower MSE. This might be due to that x-domain directly optimizes MSE on pixel-level. Therefore, we choose x-domain constraint in later experiments.
\begin{figure}[thb]
\centering
    \includegraphics[width=\linewidth]{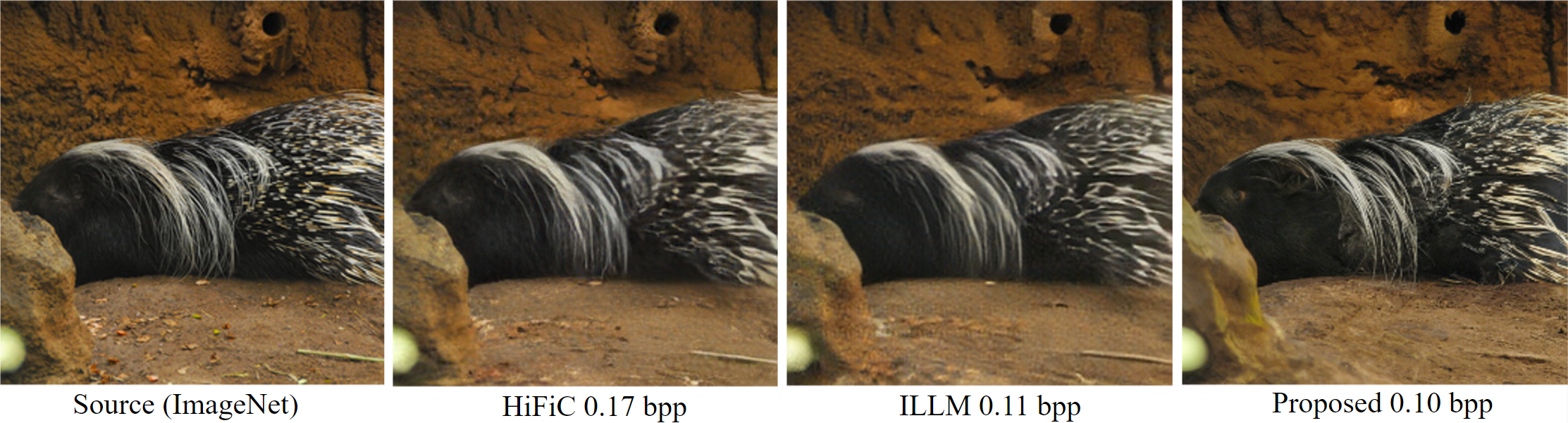}
    \includegraphics[width=\linewidth]{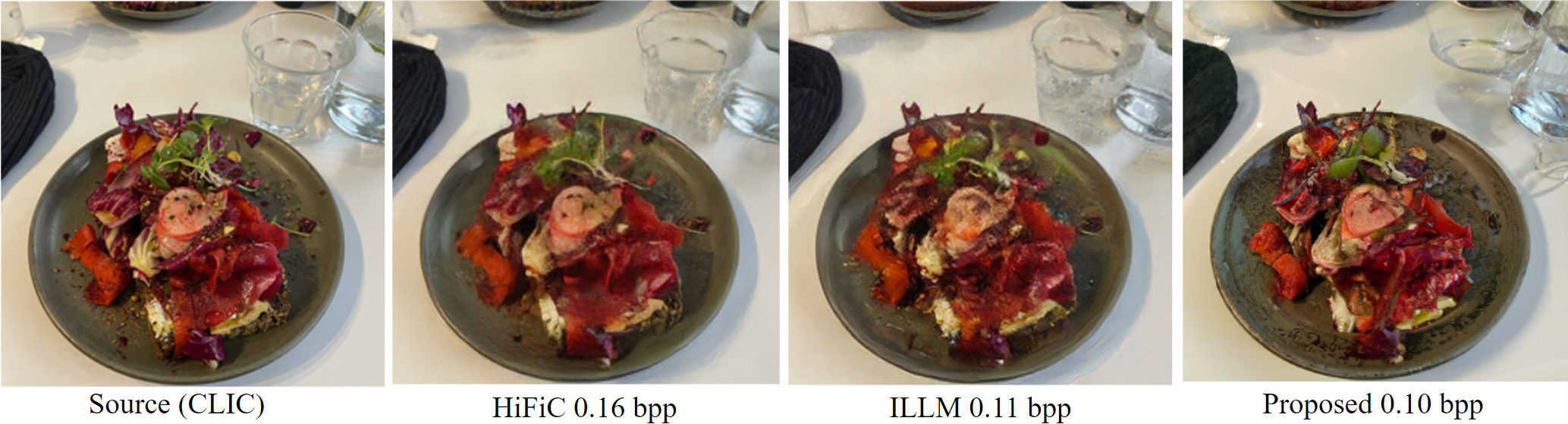}
\caption{A visual comparison of our proposed approach with state-of-the-art perceptual image codec, such as HiFiC \citep{Mentzer2020HighFidelityGI} and ILLM \citep{muckley2023improving}. }
\label{fig:qual}
\end{figure}
\begin{figure}[thb]
\centering
    \includegraphics[width=\linewidth]{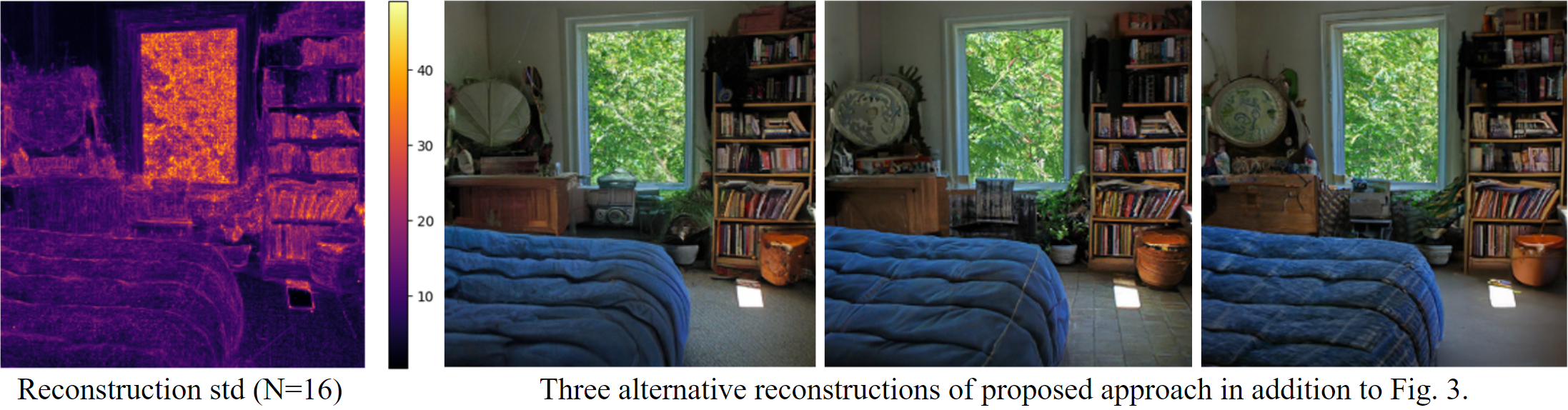}
\caption{Reconstruction diversity of proposed approach. }
\label{fig:div}
\end{figure}
\vspace{-2.0em}
\subsection{Main Results}
\textbf{Perceptual Quality} We compare the perceptual quality, in terms of FID, with state-of-the-art perceptual codec on multiple datasets. The results are shown in Tab.~\ref{tab:rd} and Fig.~\ref{fig:rd} of Appendix.~\ref{app:ar}. Tab.~\ref{tab:rd} shows that our approach with ELIC \citep{he2022elic} achieves the lowest FID on all datasets. Furthermore, our approach with Hyper \citep{balle2018variational} achieves second lowest FID on all datasets. We note that the base codec of HiFiC and ILLM is Mean-Scale Hyper \citep{minnen2018joint}, which outperforms the Hyper. On the other hand, the base codec of Po-ELIC is ELIC, which is the same as ELIC. Besides, our approach outperforms CDC, which uses DDPM as generative model like us. Additionally, on FFHQ and ImageNet dataset, our approach also outperforms HiFiC and ILLM re-trained on those two dataset. Therefore, it is clear that our approach outperforms previous perceptual codec and achieves state-of-the-art FID metric, as we have excluded the difference of base codec, generative model and dataset. Furthermore, qualitative results in Fig.~\ref{fig:cover}, Fig.~\ref{fig:qual} and Fig.~\ref{fig:qual_ffhq}-~\ref{fig:qual_clic} of Appendix~\ref{app:ar} also show that our approach is visually more desirable. On the other hand, Fig.~\ref{fig:rd} of Appendix~\ref{app:ar} shows that the MSE of our approach is within twice of base codec, and Tab.~\ref{tab:rd} shows that the PSNR of our method is within $10\log_{10}2=3.01$ dB of base codec. This means that our proposed approach satisfies the MSE bound by rate-distortion-perception trade-off \citep{blau2018perception}. And as our approach shares bitstream with MSE codec, we can achieve perception-distortion trade-off (See Appendix.~\ref{app:ar}). In addition, we also evaluate KID \citep{Binkowski2018DemystifyingMG} and LPIPS \citep{Zhang2018TheUE} in Appendix.~\ref{app:ar}.
\begin{table}[thb]
\centering
\resizebox{\linewidth}{!}{
\setlength\tabcolsep{1 pt}
\begin{tabular}{@{}lllllllll@{}}
\toprule
\multirow{2}{*}{Method} & \multicolumn{2}{c}{FFHQ} & \multicolumn{2}{c}{ImageNet} & \multicolumn{2}{c}{COCO} & \multicolumn{2}{c}{CLIC} \\ \cmidrule(rr){2-3} \cmidrule(rr){4-5} \cmidrule(rr){6-7} \cmidrule(rr){8-9} 
 & BD-FID $\downarrow$ & BD-PSNR $\uparrow$ & BD-FID $\downarrow$ & BD-PSNR $\uparrow$ & BD-FID $\downarrow$ & BD-PSNR $\uparrow$ & BD-FID $\downarrow$ & BD-PSNR $\uparrow$ \\ \midrule
\multicolumn{9}{@{}l@{}}{\textit{MSE Baselines}} \\
Hyper & 0.000 & 0.000 & 0.000 & 0.000 & 0.000 & 0.000 & 0.000 & 0.000 \\
ELIC & -9.740 & 1.736 & -10.50 & 1.434 & -8.070 & 1.535 & -10.23 & 1.660 \\
BPG & -4.830 & -0.8491 & -8.830 & -0.3562 & -4.770 & -0.3557 & -4.460 & -0.4860 \\ 
VTM & -14.22 & 0.7495 & -13.11 & 0.9018 & -11.22 & 0.9724 & -12.21 & 1.037 \\ \midrule
\multicolumn{9}{@{}l@{}}{\textit{Conditional Generative Model-based}} \\
HiFiC & -48.35 & -2.036 & -44.52 & -1.418 & -44.88 & -1.276 & -36.16 & -1.621 \\
HiFiC$^*$ & -51.85 & -1.920 & -47.18 & -1.121 & - & - & - & - \\
Po-ELIC & -50.77 & 0.1599 & -48.84 & 0.1202 & -50.81 & 0.2040 & -42.96 & 0.3305\\
CDC & -43.80 & -8.014 & -41.75 & -6.416 & -45.35 & -6.512 & -38.31 & -7.043 \\
ILLM & -50.58 & -1.234 & -48.22 & -0.4802 & -50.67 & -0.5468 & -42.95 & -0.5956 \\
ILLM$^*$ & -52.32 & -1.415 & -47.99 & -0.7513 & - & - & - & - \\ \midrule
\multicolumn{9}{@{}l@{}}{\textit{Unconditional Generative Model-based}} \\
Proposed (Hyper) & \underline{-54.14} & -2.225 & \underline{-52.12} & -2.648 & \underline{-56.70} & -2.496 & \underline{-44.52} & -2.920 \\
Proposed (ELIC) & \textbf{-54.89} & -0.9855 & \textbf{-55.18} & -1.492 & \textbf{-58.45} & -1.370 & \textbf{-46.52} & -1.635 \\ \bottomrule
\end{tabular}
}
\caption{Results on FFHQ, ImageNet, COCO and CLIC. $^*$: re-trained on corresponding dataset. \textbf{Bold}: lowest FID. \underline{Underline}: second lowest FID.}
\label{tab:rd}
\end{table}

\textbf{Diversity of Reconstruction} Another feature of our approach is the reconstruction diversity. Though it is theoretically beneficial to adopt stochastic decoder for conditional generative codec \citep{Freirich2021ATO}, most of previous works \citep{Mentzer2020HighFidelityGI,he2022po,muckley2023improving} adopt deterministic decoder and lost reconstruction diversity. On the other hand, our approach preserves this diversity. In Fig.~\ref{fig:div}, we show the pixel-level standard deviation $\sigma$ of our reconstruction on the first image of Fig.~\ref{fig:qual} with sample size 16. Further, in Fig.~\ref{fig:div},~\ref{fig:div_app1} of Appendix~\ref{app:ar}, we present alternative reconstructions, which differ a lot in detail but all have good visual quality. 

\textbf{Idempotence} Consider we have a MSE codec. The first time compression is $\hat{X}^{(1)} = g_0(f_0(X))$, and re-compression is $\hat{X}^{(2)} = g_0(f_0(\hat{X}^{(1)}))$. We evaluate its idempotence by MSE between first time compression and re-compression $||\hat{X}^{(1)}-\hat{X}^{(2)}||^2$. According to our theoretical results, we can also use our approach to improve idempotence of base codec. Specifically, after first time compression, we can decode a perceptual reconstruction $\hat{X}^{(1)}_p$ by Eq.~\ref{eq:invy}, and use this perceptual reconstruction for re-compression $\hat{X}^{(2)'} = g_0(f_0(\hat{X}^{(1)}_p))$. By Theorem~\ref{thm:01}, we should have $||\hat{X}^{(1)}-\hat{X}^{(2)'}||^2 = 0$, i.e., this augmented codec is strictly idempotent. In practice, as shown in Tab.~\ref{tab:idem}, the augmented codec's re-compression MSE is smaller than the base codec. This indicates that our proposed approach can acts as an idempotence improvement module for MSE codec.

\begin{minipage}{\textwidth}
\vspace{7pt}
\begin{minipage}[t]{0.47\textwidth}
\vspace{0pt}
\centering
\resizebox{\linewidth}{!}{
\begin{tabular}{@{}lll@{}}
\toprule
                 & \multicolumn{2}{c}{Re-compression metrics} \\ \cmidrule{2-3}
                 & MSE $\downarrow$                                     & PSNR (dB) $\uparrow$                                    \\ \midrule
Hyper            & $6.321$ & 40.42 \\
Hyper w/ Proposed & $2.850 $ & 44.84 \\ \midrule
ELIC             & $11.80$ & 37.60 \\
ELIC w/ Proposed  & $7.367$ & 40.93 \\ \bottomrule
\end{tabular}
}
\captionof{table}{The idempotence comparison between base MSE codec and our approach.}
\label{tab:idem}
\end{minipage}
\hfill
\begin{minipage}[t]{0.50\textwidth}
\vspace{0pt}
\centering
\resizebox{\linewidth}{!}{
\begin{tabular}{@{}llll@{}}
\toprule
Method & \makecell{ Number \\ of models } & Train & Test\\ \midrule
HiFiC, ILLM & $K$ & $\sim$ $K$ weeks & $\sim$0.1s \\
MCG, DPS & 1 & 0 & $\sim$50s \\ 
Proposed & 1 & 0 & $\sim$60s \\
\bottomrule
\end{tabular}
}
\captionof{table}{Complexity of different methods, $K$ refers to the number of bitrate supported, which is 3 for HiFiC and 6 for ILLM. }
% \jj{Make the two tables equal width and equal height?  ('Num of models' in 2 lines?)}
\label{tab:comp}
\end{minipage}
\end{minipage}

\textbf{Complexity} Compared with conditional generative codec (e.g., HiFiC \citep{Mentzer2020HighFidelityGI} and ILLM \citep{muckley2023improving}), our approach has lower training but higher testing complexity. Compared with inversion based image super-resolution (e.g. MCG \citep{chung2022improving}, DPS \citep{chung2022diffusion}), our approach's complexity is similar. Tab.~\ref{tab:comp} shows that our approach greatly reduces training time. Specifically, for conditional generative codec, a conditional generative model is required for each rate. If the codec supports $K$ rates, $K$ generative models should be trained. For HiFiC and ILLM, each model takes approximately 1 week. And the codec needs $K$ weeks of training. While for us, this extra training time is $0$ as we can utilize pre-trained unconditional generative model. While as the inversion-based image super-resolution, our method requires gradient ascent during testing, which increases the testing time from $\sim 0.1$s to $\sim 60$s.

\section{Related Work}
\subsection{Idempotent Image Compression} Idempotence is a consideration in practical image codec \citep{joshi2000comparison}. 
In traditional codecs like JPEG \citep{wallace1991jpeg} and JPEG2000 \citep{taubman2002jpeg2000}, the encoding and decoding transforms can be easily made invertible, thus idempotence can be painlessly achieved in these codec.
In NIC, however, it requires much more efforts to ensure the idempotence \citep{kim2020instability}. This is because neural network based transforms are widely adopted in NIC, and making these transforms invertible either hinders the RD performance \citep{helminger2021lossy} or complicates the coding process \citep{cai2022high}.
To the best of our knowledge, we are the first to build theoretical connection between idempotence and perceptual quality. 

\subsection{Perceptual Image Compression} The majority of perceptual image codec adopt conditional generative models and successfully achieve near lossless perceptual quality with very low bitrate \citep{Rippel2017RealTimeAI,tschannen2018deep,Mentzer2020HighFidelityGI,Agustsson2022MultiRealismIC,yang2022lossy,muckley2023improving,hoogeboom2023high}. \citet{blau2019rethinking} show that conditional generative codec achieves perfect perceptual quality with at most twice of optimal MSE. \citet{Yan2021OnPL} further prove that this approach is the optimal among deterministic encoders. The unconditional generative model-based codec is explored by \citet{ho2020denoising, Theis2022LossyCW}. Though no actual codec is implemented, they reveal the potential of unconditional generative model in image compression. To the best of our knowledge, our approach is the first actual codec using unconditional generative model. It does not require training new models and is equivalent to conditional generative model-based perceptual codec.

After the submission deadline, we become aware of an alternative of Theorem.~\ref{thm:01} and Theorem.~\ref{thm:02}, which is presented in a concurrent paper by \citet{ohayon2023reasons}.

\section{Discussion \& Conclusion}
A major limitation of our proposed approach is the testing time. This limitation is shared by all methods that use inversion of generative models \citep{Menon2020PULSESP,daras2021intermediate,chung2022diffusion}, and there are pioneering works trying to accelerate it \citep{dinh2022hyperinverter}. Another limitation is that the resolution of the proposed approach is not as flexible. We can use patches as workaround \citep{hoogeboom2023high} but there can be consistency issue. This limitation is also shared by all unconditional generative models, and there are also pioneering works trying to solve it \citep{Zhang2022DimensionalityVaryingDP}. As those two limitations are also important in broader generative modeling community, we believe they will be solved soon as the early-stage methods grow mature.

To conclude, we reveal that idempodence and perceptual image compression are closely connected. We theoretically prove that conditional generative codec satisfies idempotence, and unconditional generative model with idempotence constrain is equivalent to conditional generative codec. Based on that, we propose a new paradigm of perceptual codec by inverting unconditional generative model with idempotence constraint. Our approach does not require training new models, and it outperforms previous state-of-the-art perceptual codec.

\subsubsection*{Ethics Statement}
Improving the perceptual quality of NIC in low bitrate has positive social values, including reducing carbon emission by saving resources for image transmission and storage. However, there can also be negative impacts. For example, in FFHQ dataset, a face with identity different from the original image can be reconstructed. And this mis-representation problem can bring issue in trustworthiness.
 
\subsubsection*{Reproducibility Statement}
For theoretical results, the proof for all theorems are presented in Appendix~\ref{app:pf}. For experiment, all four datasets used are publicly accessible. In Appendix~\ref{app:setup}, we provide additional implementation details including the testing scripts for baselines and how we tune hyper-parameters. Besides, we provide source code for reproducing the experimental results as supplementary material.

\subsubsection*{Acknowledgments}
Funded by National Science and Technology Major Project (2022ZD0115502) and Baidu Inc. through Apollo-AIR Joint Research Center.

\bibliography{iclr2024_conference}

\begin{thebibliography}{64}
\providecommand{\natexlab}[1]{#1}
\providecommand{\url}[1]{\texttt{#1}}
\expandafter\ifx\csname urlstyle\endcsname\relax
  \providecommand{\doi}[1]{doi: #1}\else
  \providecommand{\doi}{doi: \begingroup \urlstyle{rm}\Url}\fi

\bibitem[Agustsson et~al.(2022)Agustsson, Minnen, Toderici, and
  Mentzer]{Agustsson2022MultiRealismIC}
Eirikur Agustsson, David~C. Minnen, George Toderici, and Fabian Mentzer.
\newblock Multi-realism image compression with a conditional generator.
\newblock \emph{ArXiv}, abs/2212.13824, 2022.

\bibitem[Alakuijala et~al.(2019)Alakuijala, Van~Asseldonk, Boukortt, Bruse,
  Comșa, Firsching, Fischbacher, Kliuchnikov, Gomez, Obryk,
  et~al.]{alakuijala2019jpeg}
Jyrki Alakuijala, Ruud Van~Asseldonk, Sami Boukortt, Martin Bruse, Iulia-Maria
  Comșa, Moritz Firsching, Thomas Fischbacher, Evgenii Kliuchnikov, Sebastian
  Gomez, Robert Obryk, et~al.
\newblock Jpeg xl next-generation image compression architecture and coding
  tools.
\newblock In \emph{Applications of Digital Image Processing XLII}, volume
  11137, pp.\  112--124. SPIE, 2019.

\bibitem[Ball{\'e} et~al.(2017)Ball{\'e}, Laparra, and
  Simoncelli]{balle2017end}
Johannes Ball{\'e}, Valero Laparra, and Eero~P Simoncelli.
\newblock End-to-end optimized image compression.
\newblock In \emph{International Conference on Learning Representations}, 2017.

\bibitem[Ball{\'e} et~al.(2018)Ball{\'e}, Minnen, Singh, Hwang, and
  Johnston]{balle2018variational}
Johannes Ball{\'e}, David Minnen, Saurabh Singh, Sung~Jin Hwang, and Nick
  Johnston.
\newblock Variational image compression with a scale hyperprior.
\newblock In \emph{International Conference on Learning Representations}, 2018.

\bibitem[B{\'e}gaint et~al.(2020)B{\'e}gaint, Racap{\'e}, Feltman, and
  Pushparaja]{begaint2020compressai}
Jean B{\'e}gaint, Fabien Racap{\'e}, Simon Feltman, and Akshay Pushparaja.
\newblock Compressai: a pytorch library and evaluation platform for end-to-end
  compression research.
\newblock \emph{arXiv preprint arXiv:2011.03029}, 2020.

\bibitem[Binkowski et~al.(2018)Binkowski, Sutherland, Arbel, and
  Gretton]{Binkowski2018DemystifyingMG}
Mikolaj Binkowski, Danica~J. Sutherland, Michal Arbel, and Arthur Gretton.
\newblock Demystifying mmd gans.
\newblock \emph{ArXiv}, abs/1801.01401, 2018.
\newblock URL \url{https://api.semanticscholar.org/CorpusID:3531856}.

\bibitem[Bjontegaard(2001)]{bjontegaard2001calculation}
Gisle Bjontegaard.
\newblock Calculation of average psnr differences between rd-curves.
\newblock \emph{VCEG-M33}, 2001.

\bibitem[Blau \& Michaeli(2018)Blau and Michaeli]{blau2018perception}
Yochai Blau and Tomer Michaeli.
\newblock The perception-distortion tradeoff.
\newblock In \emph{Proceedings of the IEEE conference on computer vision and
  pattern recognition}, pp.\  6228--6237, 2018.

\bibitem[Blau \& Michaeli(2019)Blau and Michaeli]{blau2019rethinking}
Yochai Blau and Tomer Michaeli.
\newblock Rethinking lossy compression: The rate-distortion-perception
  tradeoff.
\newblock In \emph{International Conference on Machine Learning}, pp.\
  675--685. PMLR, 2019.

\bibitem[Bross et~al.(2021)Bross, Wang, Ye, Liu, Chen, Sullivan, and
  Ohm]{bross2021overview}
Benjamin Bross, Ye-Kui Wang, Yan Ye, Shan Liu, Jianle Chen, Gary~J Sullivan,
  and Jens-Rainer Ohm.
\newblock Overview of the versatile video coding (vvc) standard and its
  applications.
\newblock \emph{IEEE Transactions on Circuits and Systems for Video
  Technology}, 31\penalty0 (10):\penalty0 3736--3764, 2021.

\bibitem[Cai et~al.(2022)Cai, Zhang, Chen, Yan, Zhong, and Zou]{cai2022high}
Shilv Cai, Zhijun Zhang, Liqun Chen, Luxin Yan, Sheng Zhong, and Xu~Zou.
\newblock High-fidelity variable-rate image compression via invertible
  activation transformation.
\newblock In \emph{Proceedings of the 30th ACM International Conference on
  Multimedia}, pp.\  2021--2031, 2022.

\bibitem[Cheng et~al.(2020)Cheng, Sun, Takeuchi, and Katto]{cheng2020learned}
Zhengxue Cheng, Heming Sun, Masaru Takeuchi, and Jiro Katto.
\newblock Learned image compression with discretized gaussian mixture
  likelihoods and attention modules.
\newblock In \emph{Proceedings of the IEEE/CVF Conference on Computer Vision
  and Pattern Recognition}, pp.\  7939--7948, 2020.

\bibitem[Chung et~al.(2022{\natexlab{a}})Chung, Kim, Mccann, Klasky, and
  Ye]{chung2022diffusion}
Hyungjin Chung, Jeongsol Kim, Michael~T Mccann, Marc~L Klasky, and Jong~Chul
  Ye.
\newblock Diffusion posterior sampling for general noisy inverse problems.
\newblock \emph{arXiv preprint arXiv:2209.14687}, 2022{\natexlab{a}}.

\bibitem[Chung et~al.(2022{\natexlab{b}})Chung, Sim, Ryu, and
  Ye]{chung2022improving}
Hyungjin Chung, Byeongsu Sim, Dohoon Ryu, and Jong~Chul Ye.
\newblock Improving diffusion models for inverse problems using manifold
  constraints.
\newblock \emph{Advances in Neural Information Processing Systems},
  35:\penalty0 25683--25696, 2022{\natexlab{b}}.

\bibitem[Daras et~al.(2021)Daras, Dean, Jalal, and
  Dimakis]{daras2021intermediate}
Giannis Daras, Joseph Dean, Ajil Jalal, and Alexandros~G Dimakis.
\newblock Intermediate layer optimization for inverse problems using deep
  generative models.
\newblock \emph{arXiv preprint arXiv:2102.07364}, 2021.

\bibitem[Daras et~al.(2022)Daras, Dagan, Dimakis, and
  Daskalakis]{daras2022score}
Giannis Daras, Yuval Dagan, Alexandros~G Dimakis, and Constantinos Daskalakis.
\newblock Score-guided intermediate layer optimization: Fast langevin mixing
  for inverse problem.
\newblock \emph{arXiv preprint arXiv:2206.09104}, 2022.

\bibitem[Deng et~al.(2009)Deng, Dong, Socher, Li, Li, and
  Fei-Fei]{deng2009imagenet}
Jia Deng, Wei Dong, Richard Socher, Li-Jia Li, Kai Li, and Li~Fei-Fei.
\newblock Imagenet: A large-scale hierarchical image database.
\newblock In \emph{2009 IEEE conference on computer vision and pattern
  recognition}, pp.\  248--255. Ieee, 2009.

\bibitem[Dhariwal \& Nichol(2021)Dhariwal and Nichol]{dhariwal2021diffusion}
Prafulla Dhariwal and Alexander Nichol.
\newblock Diffusion models beat gans on image synthesis.
\newblock \emph{Advances in neural information processing systems},
  34:\penalty0 8780--8794, 2021.

\bibitem[Dinh et~al.(2022)Dinh, Tran, Nguyen, and Hua]{dinh2022hyperinverter}
Tan~M Dinh, Anh~Tuan Tran, Rang Nguyen, and Binh-Son Hua.
\newblock Hyperinverter: Improving stylegan inversion via hypernetwork.
\newblock In \emph{Proceedings of the IEEE/CVF conference on computer vision
  and pattern recognition}, pp.\  11389--11398, 2022.

\bibitem[Freirich et~al.(2021)Freirich, Michaeli, and Meir]{Freirich2021ATO}
Dror Freirich, Tomer Michaeli, and Ron Meir.
\newblock A theory of the distortion-perception tradeoff in wasserstein space.
\newblock \emph{ArXiv}, abs/2107.02555, 2021.

\bibitem[Goose et~al.(2023)Goose, Petersen, Wiggers, Xu, and
  Sautiere]{goose2023neural}
Noor~Fathima Goose, Jens Petersen, Auke Wiggers, Tianlin Xu, and Guillaume
  Sautiere.
\newblock Neural image compression with a diffusion-based decoder.
\newblock \emph{arXiv preprint arXiv:2301.05489}, 2023.

\bibitem[He et~al.(2022{\natexlab{a}})He, Yang, Peng, Ma, Qin, and
  Wang]{he2022elic}
Dailan He, Ziming Yang, Weikun Peng, Rui Ma, Hongwei Qin, and Yan Wang.
\newblock Elic: Efficient learned image compression with unevenly grouped
  space-channel contextual adaptive coding.
\newblock \emph{arXiv preprint arXiv:2203.10886}, 2022{\natexlab{a}}.

\bibitem[He et~al.(2022{\natexlab{b}})He, Yang, Yu, Xu, Luo, Chen, Gao, Shi,
  Qin, and Wang]{he2022po}
Dailan He, Ziming Yang, Hongjiu Yu, Tongda Xu, Jixiang Luo, Yuan Chen, Chenjian
  Gao, Xinjie Shi, Hongwei Qin, and Yan Wang.
\newblock Po-elic: Perception-oriented efficient learned image coding.
\newblock In \emph{Proceedings of the IEEE/CVF Conference on Computer Vision
  and Pattern Recognition}, pp.\  1764--1769, 2022{\natexlab{b}}.

\bibitem[Helminger et~al.(2021)Helminger, Djelouah, Gross, and
  Schroers]{helminger2021lossy}
Leonhard Helminger, Abdelaziz Djelouah, Markus Gross, and Christopher Schroers.
\newblock Lossy image compression with normalizing flows.
\newblock In \emph{Neural Compression: From Information Theory to
  Applications--Workshop@ ICLR 2021}, 2021.

\bibitem[Ho \& Salimans(2022)Ho and Salimans]{ho2022classifier}
Jonathan Ho and Tim Salimans.
\newblock Classifier-free diffusion guidance.
\newblock \emph{arXiv preprint arXiv:2207.12598}, 2022.

\bibitem[Ho et~al.(2020)Ho, Jain, and Abbeel]{ho2020denoising}
Jonathan Ho, Ajay Jain, and Pieter Abbeel.
\newblock Denoising diffusion probabilistic models.
\newblock \emph{Advances in neural information processing systems},
  33:\penalty0 6840--6851, 2020.

\bibitem[Hoogeboom et~al.(2023)Hoogeboom, Agustsson, Mentzer, Versari,
  Toderici, and Theis]{hoogeboom2023high}
Emiel Hoogeboom, Eirikur Agustsson, Fabian Mentzer, Luca Versari, George
  Toderici, and Lucas Theis.
\newblock High-fidelity image compression with score-based generative models.
\newblock \emph{arXiv preprint arXiv:2305.18231}, 2023.

\bibitem[Iwai et~al.(2020)Iwai, Miyazaki, Sugaya, and
  Omachi]{Iwai2020FidelityControllableEI}
Shoma Iwai, Tomo Miyazaki, Yoshihiro Sugaya, and Shinichiro Omachi.
\newblock Fidelity-controllable extreme image compression with generative
  adversarial networks.
\newblock \emph{2020 25th International Conference on Pattern Recognition
  (ICPR)}, pp.\  8235--8242, 2020.

\bibitem[Iwai et~al.(2021)Iwai, Miyazaki, Sugaya, and Omachi]{iwai2021fidelity}
Shoma Iwai, Tomo Miyazaki, Yoshihiro Sugaya, and Shinichiro Omachi.
\newblock Fidelity-controllable extreme image compression with generative
  adversarial networks.
\newblock In \emph{2020 25th International Conference on Pattern Recognition
  (ICPR)}, pp.\  8235--8242. IEEE, 2021.

\bibitem[Joshi et~al.(2000)Joshi, Rabbani, and Lepley]{joshi2000comparison}
Rajan~L Joshi, Majid Rabbani, and Margaret~A Lepley.
\newblock Comparison of multiple compression cycle performance for jpeg and
  jpeg 2000.
\newblock In \emph{Applications of Digital Image Processing XXIII}, volume
  4115, pp.\  492--501. SPIE, 2000.

\bibitem[Karras et~al.(2019)Karras, Laine, and Aila]{karras2019style}
Tero Karras, Samuli Laine, and Timo Aila.
\newblock A style-based generator architecture for generative adversarial
  networks.
\newblock In \emph{Proceedings of the IEEE/CVF conference on computer vision
  and pattern recognition}, pp.\  4401--4410, 2019.

\bibitem[Kim et~al.(2020)Kim, Jang, Choi, and Lee]{kim2020instability}
Jun-Hyuk Kim, Soobeom Jang, Jun-Ho Choi, and Jong-Seok Lee.
\newblock Instability of successive deep image compression.
\newblock In \emph{Proceedings of the 28th ACM International Conference on
  Multimedia}, pp.\  247--255, 2020.

\bibitem[Li \& El~Gamal(2018)Li and El~Gamal]{li2018strong}
Cheuk~Ting Li and Abbas El~Gamal.
\newblock Strong functional representation lemma and applications to coding
  theorems.
\newblock \emph{IEEE Transactions on Information Theory}, 64\penalty0
  (11):\penalty0 6967--6978, 2018.

\bibitem[Li et~al.(2023)Li, Xu, Wang, Liu, and Zhang]{li2023idempotent}
Yanghao Li, Tongda Xu, Yan Wang, Jingjing Liu, and Ya-Qin Zhang.
\newblock Idempotent learned image compression with right-inverse.
\newblock In \emph{Thirty-seventh Conference on Neural Information Processing
  Systems}, 2023.

\bibitem[Lin et~al.(2014)Lin, Maire, Belongie, Hays, Perona, Ramanan,
  Doll{\'a}r, and Zitnick]{lin2014microsoft}
Tsung-Yi Lin, Michael Maire, Serge Belongie, James Hays, Pietro Perona, Deva
  Ramanan, Piotr Doll{\'a}r, and C~Lawrence Zitnick.
\newblock Microsoft coco: Common objects in context.
\newblock In \emph{Computer Vision--ECCV 2014: 13th European Conference,
  Zurich, Switzerland, September 6-12, 2014, Proceedings, Part V 13}, pp.\
  740--755. Springer, 2014.

\bibitem[Liu et~al.(2023)Liu, Sun, and Katto]{liu2023learned}
Jinming Liu, Heming Sun, and Jiro Katto.
\newblock Learned image compression with mixed transformer-cnn architectures.
\newblock In \emph{Proceedings of the IEEE/CVF Conference on Computer Vision
  and Pattern Recognition}, pp.\  14388--14397, 2023.

\bibitem[Ma et~al.(2021)Ma, Zhai, Yang, Yang, Wang, Zhou, Li, Chen, and
  Wang]{ma2021variable}
Yi~Ma, Yongqi Zhai, Chunhui Yang, Jiayu Yang, Ruofan Wang, Jing Zhou, Kai Li,
  Ying Chen, and Ronggang Wang.
\newblock Variable rate roi image compression optimized for visual quality.
\newblock In \emph{Proceedings of the IEEE/CVF Conference on Computer Vision
  and Pattern Recognition}, pp.\  1936--1940, 2021.

\bibitem[Menon et~al.(2020)Menon, Damian, Hu, Ravi, and
  Rudin]{Menon2020PULSESP}
Sachit Menon, Alexandru Damian, Shijia Hu, Nikhil Ravi, and Cynthia Rudin.
\newblock Pulse: Self-supervised photo upsampling via latent space exploration
  of generative models.
\newblock \emph{2020 IEEE/CVF Conference on Computer Vision and Pattern
  Recognition (CVPR)}, pp.\  2434--2442, 2020.
\newblock URL \url{https://api.semanticscholar.org/CorpusID:212634162}.

\bibitem[Mentzer et~al.(2020)Mentzer, Toderici, Tschannen, and
  Agustsson]{Mentzer2020HighFidelityGI}
Fabian Mentzer, George Toderici, Michael Tschannen, and Eirikur Agustsson.
\newblock High-fidelity generative image compression.
\newblock \emph{ArXiv}, abs/2006.09965, 2020.

\bibitem[Minnen \& Singh(2020)Minnen and Singh]{minnen2020channel}
David Minnen and Saurabh Singh.
\newblock Channel-wise autoregressive entropy models for learned image
  compression.
\newblock In \emph{2020 IEEE International Conference on Image Processing
  (ICIP)}, pp.\  3339--3343. IEEE, 2020.

\bibitem[Minnen et~al.(2018)Minnen, Ball{\'e}, and Toderici]{minnen2018joint}
David Minnen, Johannes Ball{\'e}, and George~D Toderici.
\newblock Joint autoregressive and hierarchical priors for learned image
  compression.
\newblock \emph{Advances in neural information processing systems}, 31, 2018.

\bibitem[Muckley et~al.(2023)Muckley, El-Nouby, Ullrich, J{\'e}gou, and
  Verbeek]{muckley2023improving}
Matthew~J Muckley, Alaaeldin El-Nouby, Karen Ullrich, Herv{\'e} J{\'e}gou, and
  Jakob Verbeek.
\newblock Improving statistical fidelity for neural image compression with
  implicit local likelihood models.
\newblock 2023.

\bibitem[Ohayon et~al.(2023)Ohayon, Adrai, Elad, and
  Michaeli]{ohayon2023reasons}
Guy Ohayon, Theo~Joseph Adrai, Michael Elad, and Tomer Michaeli.
\newblock Reasons for the superiority of stochastic estimators over
  deterministic ones: Robustness, consistency and perceptual quality.
\newblock In \emph{International Conference on Machine Learning}, pp.\
  26474--26494. PMLR, 2023.

\bibitem[Rippel \& Bourdev(2017)Rippel and Bourdev]{Rippel2017RealTimeAI}
Oren Rippel and Lubomir~D. Bourdev.
\newblock Real-time adaptive image compression.
\newblock In \emph{International Conference on Machine Learning}, 2017.

\bibitem[Rissanen \& Langdon(1979)Rissanen and Langdon]{rissanen1979arithmetic}
Jorma Rissanen and Glen~G Langdon.
\newblock Arithmetic coding.
\newblock \emph{IBM Journal of research and development}, 23\penalty0
  (2):\penalty0 149--162, 1979.

\bibitem[Roich et~al.(2022)Roich, Mokady, Bermano, and
  Cohen-Or]{roich2022pivotal}
Daniel Roich, Ron Mokady, Amit~H Bermano, and Daniel Cohen-Or.
\newblock Pivotal tuning for latent-based editing of real images.
\newblock \emph{ACM Transactions on graphics (TOG)}, 42\penalty0 (1):\penalty0
  1--13, 2022.

\bibitem[Sajjadi et~al.(2018)Sajjadi, Bachem, Lucic, Bousquet, and
  Gelly]{Sajjadi2018AssessingGM}
Mehdi S.~M. Sajjadi, Olivier Bachem, Mario Lucic, Olivier Bousquet, and Sylvain
  Gelly.
\newblock Assessing generative models via precision and recall.
\newblock \emph{ArXiv}, abs/1806.00035, 2018.

\bibitem[Sushko et~al.(2020)Sushko, Sch{\"o}nfeld, Zhang, Gall, Schiele, and
  Khoreva]{Sushko2020YouON}
Vadim Sushko, Edgar Sch{\"o}nfeld, Dan Zhang, Juergen Gall, Bernt Schiele, and
  Anna Khoreva.
\newblock You only need adversarial supervision for semantic image synthesis.
\newblock \emph{ArXiv}, abs/2012.04781, 2020.

\bibitem[Taubman et~al.(2002)Taubman, Marcellin, and
  Rabbani]{taubman2002jpeg2000}
David~S Taubman, Michael~W Marcellin, and Majid Rabbani.
\newblock Jpeg2000: Image compression fundamentals, standards and practice.
\newblock \emph{Journal of Electronic Imaging}, 11\penalty0 (2):\penalty0
  286--287, 2002.

\bibitem[Theis et~al.(2022)Theis, Salimans, Hoffman, and
  Mentzer]{Theis2022LossyCW}
Lucas Theis, Tim Salimans, Matthew~D. Hoffman, and Fabian Mentzer.
\newblock Lossy compression with gaussian diffusion.
\newblock \emph{ArXiv}, abs/2206.08889, 2022.
\newblock URL \url{https://api.semanticscholar.org/CorpusID:249847877}.

\bibitem[Toderici et~al.(2020)Toderici, Theis, Johnston, Agustsson, Mentzer,
  Ball{\'e}, Shi, and Timofte]{toderici2020clic}
George Toderici, Lucas Theis, Nick Johnston, Eirikur Agustsson, Fabian Mentzer,
  Johannes Ball{\'e}, Wenzhe Shi, and Radu Timofte.
\newblock Clic 2020: Challenge on learned image compression, 2020, 2020.

\bibitem[Townsend et~al.(2018)Townsend, Bird, and
  Barber]{townsend2018practical}
James Townsend, Thomas Bird, and David Barber.
\newblock Practical lossless compression with latent variables using bits back
  coding.
\newblock In \emph{International Conference on Learning Representations}, 2018.

\bibitem[Tschannen et~al.(2018)Tschannen, Agustsson, and
  Lucic]{tschannen2018deep}
Michael Tschannen, Eirikur Agustsson, and Mario Lucic.
\newblock Deep generative models for distribution-preserving lossy compression.
\newblock \emph{Advances in neural information processing systems}, 31, 2018.

\bibitem[Wallace(1991)]{wallace1991jpeg}
Gregory~K Wallace.
\newblock The jpeg still picture compression standard.
\newblock \emph{Communications of the ACM}, 34\penalty0 (4):\penalty0 30--44,
  1991.

\bibitem[Wang et~al.(2022)Wang, Yu, and Zhang]{wang2022zero}
Yinhuai Wang, Jiwen Yu, and Jian Zhang.
\newblock Zero-shot image restoration using denoising diffusion null-space
  model.
\newblock \emph{arXiv preprint arXiv:2212.00490}, 2022.

\bibitem[Xu et~al.(2022)Xu, Wang, He, Gao, Gao, Liu, and Qin]{xu2022multi}
Tongda Xu, Yan Wang, Dailan He, Chenjian Gao, Han Gao, Kunzan Liu, and Hongwei
  Qin.
\newblock Multi-sample training for neural image compression.
\newblock \emph{Advances in Neural Information Processing Systems},
  35:\penalty0 1502--1515, 2022.

\bibitem[Xue et~al.(2017)Xue, Chen, Wu, Wei, and Freeman]{Xue2017VideoEW}
Tianfan Xue, Baian Chen, Jiajun Wu, D.~Wei, and William~T. Freeman.
\newblock Video enhancement with task-oriented flow.
\newblock \emph{International Journal of Computer Vision}, pp.\  1--20, 2017.
\newblock URL \url{https://api.semanticscholar.org/CorpusID:40412298}.

\bibitem[Yan et~al.(2021)Yan, Wen, Ying, Ma, and Liu]{Yan2021OnPL}
Zeyu Yan, Fei Wen, Rendong Ying, Chao Ma, and Peilin Liu.
\newblock On perceptual lossy compression: The cost of perceptual
  reconstruction and an optimal training framework.
\newblock \emph{ArXiv}, abs/2106.02782, 2021.

\bibitem[Yan et~al.(2022)Yan, Wen, and Liu]{Yan2022OptimallyCP}
Zeyu Yan, Fei Wen, and Pei-Yun Liu.
\newblock Optimally controllable perceptual lossy compression.
\newblock \emph{ArXiv}, abs/2206.10082, 2022.

\bibitem[Yang \& Mandt(2023)Yang and Mandt]{yang2022lossy}
Ruihan Yang and Stephan Mandt.
\newblock Lossy image compression with conditional diffusion models.
\newblock \emph{arXiv preprint arXiv:2209.06950}, 2023.

\bibitem[Yang et~al.(2020)Yang, Bamler, and Mandt]{yang2020improving}
Yibo Yang, Robert Bamler, and Stephan Mandt.
\newblock Improving inference for neural image compression.
\newblock \emph{Advances in Neural Information Processing Systems},
  33:\penalty0 573--584, 2020.

\bibitem[Zhang et~al.(2022)Zhang, Feng, Yang, Huang, Liu, Zhang, Shen, Zhao,
  Zhou, and Cheng]{Zhang2022DimensionalityVaryingDP}
Han Zhang, Ruili Feng, Zhantao Yang, Lianghua Huang, Yu~Liu, Yifei Zhang, Yujun
  Shen, Deli Zhao, Jingren Zhou, and Fan Cheng.
\newblock Dimensionality-varying diffusion process.
\newblock \emph{2023 IEEE/CVF Conference on Computer Vision and Pattern
  Recognition (CVPR)}, pp.\  14307--14316, 2022.
\newblock URL \url{https://api.semanticscholar.org/CorpusID:254070075}.

\bibitem[Zhang et~al.(2018)Zhang, Isola, Efros, Shechtman, and
  Wang]{Zhang2018TheUE}
Richard Zhang, Phillip Isola, Alexei~A. Efros, Eli Shechtman, and Oliver Wang.
\newblock The unreasonable effectiveness of deep features as a perceptual
  metric.
\newblock \emph{2018 IEEE/CVF Conference on Computer Vision and Pattern
  Recognition}, pp.\  586--595, 2018.
\newblock URL \url{https://api.semanticscholar.org/CorpusID:4766599}.

\bibitem[Zhu et~al.(2016)Zhu, Kr{\"a}henb{\"u}hl, Shechtman, and
  Efros]{zhu2016generative}
Jun-Yan Zhu, Philipp Kr{\"a}henb{\"u}hl, Eli Shechtman, and Alexei~A Efros.
\newblock Generative visual manipulation on the natural image manifold.
\newblock In \emph{Computer Vision--ECCV 2016: 14th European Conference,
  Amsterdam, The Netherlands, October 11-14, 2016, Proceedings, Part V 14},
  pp.\  597--613. Springer, 2016.

\end{thebibliography}
\bibliographystyle{iclr2024_conference}

\appendix
\section{Proof of Main Results}
\label{app:pf}
In previous sections we give intuition and proof sketch of the theoretical results. We provide formal proof in this appendix.

\textbf{Notations} We use the capital letter $X$ to represent discrete random variable, lowercase letter $x$ to represent a specific realization of random variable, and calligraphic letter $\mathcal{X}$ to represent the alphabet of a random variable. We use $p_X$ to represent the probability law of random variable and $p_X(X=x)$ to represent the probability of a specific realization of a discrete random variable. We use $\textrm{Pr}(.)$ to represent the probability of a specific event. We use $\overset{\Delta}{=}$ to represent definition and $\overset{a.s.}{=}$ to represent almost sure convergence.

\textbf{Theorem 1.}\textit{
(Perceptual quality brings idempotence)
    Denote $X$ as source, $f(.)$ as encoder, $Y = f(X)$ as bitstream, $g(.)$ as decoder and $\hat{X} = g(Y)$ as reconstruction. When encoder $f(.)$ is deterministic, then conditional generative model-based image codec is also idempotent, i.e.,
    \begin{gather}
        \hat{X} = g(Y) \sim p_{X|Y} \Rightarrow f(\hat{X}) \overset{a.s.}{=} Y.\notag
    \end{gather}
}
\begin{proof}
We start with a specific value $y\in\mathcal{Y}$, where $\mathcal{Y}$ is the alphabet of random variable $Y$. Without loss of generality, we assume $p_Y(Y=y)\neq 0$, i.e., $y$ lies within the support of $p_Y$. We define the inverse image of $y$ as a set:
    \begin{gather}
        f^{-1}[y] \overset{\Delta}{=}\{x\in \mathcal{X}|f(x) = y\},
    \end{gather}
    where $\mathcal{X}$ is the alphabet of random variable $X$. According to the definition of idempotence, we need to show that $\hat{X}\in f^{-1}[y]$. Note that as encoder $f(.)$ is deterministic, each $x\in\mathcal{X}$ only corresponds to one $y$. Again, we consider a specific value $x\in\mathcal{X},p_X(X=x)\neq 0$. We note that the likelihood of $Y$ can be written as
    \begin{gather}
        p_{Y|X}(Y=y|X=x) = \left\{
        \begin{array}{ll}
        1, & f(x) = y \\
        0, & f(x) \neq y
        \end{array} 
        \right.
    \end{gather}
    Then, for all $x\notin f^{-1}[y]$, the joint distribution $p_{XY}(X=x,Y=y) = p_X(X=x)p_{Y|X}(Y=y|X=x) = 0$. And thus the posterior $p_{X|Y}(X=x,Y=y) = 0$. In other words, for all samples $\hat{X}\sim p_{X|Y}(X|Y=y)$, the event $\textrm{Pr}(\hat{X}\notin f^{-1}[y]) = 0$. And therefore, we conclude that
    \begin{gather}
        \textrm{Pr}(\hat{X}\in f^{-1}[y]) = 1,
    \end{gather}
    which indicates almost sure convergence $f(\hat{X})\overset{a.s.}{=} Y$.
\end{proof}

\textbf{Theorem 2.}\textit{
(Idempotence brings perceptual quality)
    Denote $X$ as source, $f(.)$ as encoder, $Y = f(X)$ as bitstream, $g(.)$ as decoder and $\hat{X} = g(Y)$ as reconstruction. When encoder $f(.)$ is deterministic, the unconditional generative model with idempotence constraint is equivalent to the conditional generative model-based image codec, i.e.,
    \begin{gather}
    \hat{X}\sim p_{X}\textrm{, s.t. } f(\hat{X}) = Y \Rightarrow \hat{X} \sim p_{X|Y}.\notag
    \end{gather}
}
\begin{proof}
Similar to the proof of Theorem~\ref{thm:01}, we consider a specific value of $x\in\mathcal{X}$ with $p_X(X=x)\neq 0$. As  
\begin{gather}
    y = f(x)
\end{gather}
is a deterministic transform, we have
\begin{gather}
    p_{Y|X}(Y=y|X=x) = \left\{
\begin{array}{ll}
1, & f(x) = y \\
0, & f(x) \neq y
\end{array} 
\right.
\end{gather}
Then by Bayesian rule, for each $(x,y)\in\mathcal{X}\times\mathcal{Y}$,
\begin{gather}
    p_{X|Y}(X=x|Y=y) \propto p_{Y|X}(Y=y|X=x)p_X(X=x) \\ 
    \propto h(X=x,Y=y),\\
    \textrm{where } h(X=x,Y=y) = \left\{
\begin{array}{ll}
1\times p_X(X=x) = p_X(X=x), & f(x) = y \\
0 \times p_X(X=x) = 0, & f(x) \neq y
\end{array} 
\right.
\end{gather}
We can treat $h(X=x,Y=y)$ as an un-normalized joint distribution. $\forall (x,y)\in \mathcal{X}\times\mathcal{Y}$, it has $0$ mass where $f(x)\neq y$, and a mass proportional to $p_X(X=x)$ where $f(x) = y$. As this holds true $\forall (x,y)$, then we have 
\begin{gather}
    p_{X|Y}\propto h(X,Y)
\end{gather}
And sampling from this un-normalized distribution is equivalent to sampling from $p_X$ with the constrain $f(X)=y$. Therefore, sampling from posterior
\begin{gather}
    \hat{X}\sim p_{X|Y}(X|Y)
\end{gather}
is equivalent to the constrained sampling from marginal
\begin{gather}
    \hat{X}\sim p_X \textrm{, s.t. } f(\hat{X}) = Y,
\end{gather}
which completes the proof.
\end{proof}

\textbf{Corollary 1.}\textit{
    If $f(.)$ is the encoder of a codec with optimal MSE $\Delta^*$, then the unconditional generative model with idempotence constraint also satisfies
    \begin{gather}
        p_{\hat{X}} = p_X \textrm{, } \mathbb{E}[||X-\hat{X}||^2] \le 2 \Delta^*.
    \end{gather}
    Furthermore, the codec induced by this approach is also optimal among deterministic encoders.
}
\begin{proof}
    By Theorem 2, we have shown the equivalence of the unconditional generative model with idempotence constraint and conditional generative model. Then this corollary is simply applying Theorem 2 of \citet{blau2019rethinking} and Theorem 2, Theorem 3 of \citet{Yan2021OnPL} to conditional generative model.
\end{proof}

In previous section, we state that for MSE optimal codec, the idempotence constraint $f(\hat{X}) = Y$ is equivalent to $g(f(\hat{X})) = g(Y)$. Now we prove it in this appendix. To prove this, we only need to show that for MSE optimal codec, the decoder $g(.)$ is a invertible mapping. \citet{blau2019rethinking} already show that $g(.)$ is a deterministic mapping, i.e., $\forall y_1, y_2 \in\mathcal{Y}, g(Y_1)\neq g(Y_2)\Rightarrow  y_1\neq y_2$. Then, we only need to show that if the reconstruction is different, the bitstream is different, i.e., $\forall y_1, y_2 \in\mathcal{Y}, y_1\neq y_2 \Rightarrow g(Y_1)\neq g(Y_2)$. Formally, we have:
\begin{theorem}
    \label{thm:03}
    Denote $X$ as source, $f(.)$ as encoder, $Y = f(X)$ as bitstream, $g(.)$ as decoder and $\hat{X} = g(Y)$ as reconstruction. When encoder $f(.)$ is deterministic, for MSE optimal codec,
    \begin{gather}
        y_1\neq y_2 \Rightarrow g(y_1) \neq g(y_2).
    \end{gather}
\end{theorem}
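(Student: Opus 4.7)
The plan is to prove the contrapositive: if $g(y_1) = g(y_2)$ for distinct $y_1, y_2 \in \mathcal{Y}$, then $(f, g)$ cannot be an MSE optimal codec, so we derive a contradiction by constructing a strictly better codec at the same rate.

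First, I would invoke the standard fact that for a deterministic encoder $f$, the decoder minimizing MSE is the Bayes estimator $g(y) = \mathbb{E}[X \mid Y = y]$. Under the hypothesis $g(y_1) = g(y_2) = c$, I would construct an auxiliary codec using strictly fewer codewords without increasing MSE: set $f'(x) = f(x)$ when $f(x) \notin \{y_1, y_2\}$ and $f'(x) = y^{\star}$ otherwise, with $g'(y^{\star}) = c$ and $g'(y) = g(y)$ elsewhere. By the tower rule, $\mathbb{E}[X \mid f'(X) = y^{\star}] = \mathbb{E}[X \mid f(X) \in \{y_1, y_2\}] = c$, so $g'$ remains Bayes optimal for $f'$, and the MSE of $(f', g')$ still equals $\Delta^*$ while using only $|\mathcal{Y}| - 1$ codewords.

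Next, I would reinvest the recovered codeword to obtain a codec with MSE strictly below $\Delta^*$ at the original rate $\log |\mathcal{Y}|$. Choose any $y_0$ with $\Var(X \mid Y = y_0) > 0$, which must exist unless $\Delta^* = 0$ (in which case the theorem is vacuous). Split $f^{-1}[y_0]$ into two measurable pieces with distinct conditional means, for example via a hyperplane through $g(y_0)$ orthogonal to a direction along which the conditional distribution has positive variance, and assign these pieces to $y_0$ and the freed codeword, each decoded to its own centroid. By the standard decomposition of conditional variance, the refinement strictly reduces the contribution from the original $y_0$-cell, yielding a codec with MSE strictly less than $\Delta^*$ at rate $\log |\mathcal{Y}|$ and contradicting MSE optimality.

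The main obstacle I anticipate is the splitting step: one must ensure that the partition of $f^{-1}[y_0]$ is measurable and produces two parts with genuinely different conditional expectations, so that the MSE strictly decreases. For continuous sources this is routine (any direction of positive conditional variance works), but it requires some care for atomic or degenerate source distributions. A secondary subtlety is the precise notion of MSE optimality; the argument above operates at fixed cardinality of $\mathcal{Y}$, equivalently fixed-length rate $\log |\mathcal{Y}|$, but the merge-and-split construction adapts to the entropy rate $H(Y)$ by noting that merging two symbols weakly decreases entropy while the refinement can be carried out with arbitrarily small entropy increase in exchange for a strict MSE improvement.
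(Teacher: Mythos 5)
Your proof is correct and its opening move is exactly the paper's: assume $g(y_1)=g(y_2)$ and merge the two codewords into one, observing that the reconstruction $g(f(X))$ is pointwise unchanged. Where you diverge is in how the contradiction is extracted. The paper stops after the merge and argues on the \emph{rate} side: same MSE, strictly lower bitrate (entropy drops from merging two positive-probability symbols), hence the original codec was not on the optimal rate--distortion frontier. You instead push the argument to the \emph{distortion} side: you reinvest the freed codeword by splitting a cell with positive conditional variance into two pieces with distinct centroids, obtaining strictly lower MSE at the same codebook cardinality. Your route is longer but buys something real --- the paper's one-line conclusion implicitly assumes that "same MSE at strictly lower rate" contradicts "MSE optimal," which is only immediate if optimality is understood as lying on a strictly decreasing R--D curve; your construction produces a strict MSE improvement directly, and your closing remark (small-probability splits cost arbitrarily little entropy while the merge saves a fixed amount) correctly covers the variable-rate/entropy notion as well. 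Two minor nits: the case $\Delta^\ast=0$ is not vacuous --- the theorem still asserts injectivity there, though it holds trivially since $X=g(Y)$ a.s.\ and a common value $c=g(y_1)=g(y_2)$ would have to satisfy $f(c)=y_1$ and $f(c)=y_2$ simultaneously; and, like the paper, you should restrict $y_1,y_2$ to the support of $p_Y$, since merging a zero-probability codeword gains nothing.
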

\begin{proof}
    We prove by contradiction. We assume that $\exists y_1\neq y_2$, $g(y_1) = g(y_2)$. Then we notice that we can always construct a new codec, with bitstream $y_1,y_2$ merged into a new one with higher probability $p_Y(Y=y_1) + p_Y(Y=y_2)$. This means that this new codec has exactly the same reconstruction, while the bitrate is lower. This is in contradiction to the assumption that our codec is MSE optimal.
\end{proof}

To provide a more intuitive illustration, we will include an example with discrete finite alphabet (our theoretical results are applicable to any countable alphabet or Borel-measurable source $X$ and any deterministic measureable function $f(.)$): \\
\textbf{Example 1.}\textit{
(1-dimension, discrete finite alphabet, optimal 1-bit codec): Consider discrete 1d random variable $X$, with alphabet $\mathcal{X} = \{0,1,2,3,4,5\}$, and $Y$ with alphabet $\mathcal{Y}=\{0,1\}$. We assume $P(X=i) = \frac{1}{6}$, i.e., $X$ follows uniform distribution. We consider a deterministic transform $f(X): \mathcal{X}\rightarrow \mathcal{Y} = \textrm{round}(X/3)$. Or to say, $f(.)$ maps $\{0,1,2\}$ to $\{0\}$, and $\{3,4,5\}$ to $\{1\}$. This is infact the MSE-optimal 1-bit codec for source $X$ (See Chapter 10 of [Elements of Information Theory]). And the joint distribution $P(X,Y)$, posterior $P(X|Y)$ is just a tabular in Tab.~\ref{tab:example}. \\
\begin{table}[htb]
\centering
\begin{tabular}{@{}lllll@{}}
\toprule
x & y & $p(X=x,Y=y)$ & $p(X=x|Y=y)$ & $f(x)$ \\ \midrule
0 & 0 & 1/6        & 1          & 0    \\
1 & 0 & 1/6        & 1          & 0    \\
2 & 0 & 1/6        & 1          & 0    \\
3 & 0 & 0          & 0          & 1    \\
4 & 0 & 0          & 0          & 1    \\
5 & 0 & 0          & 0          & 1    \\
0 & 0 & 0          & 0          & 0    \\
1 & 1 & 0          & 0          & 0    \\
2 & 1 & 0          & 0          & 0    \\
3 & 1 & 1/6        & 1          & 1    \\
4 & 1 & 1/6        & 1          & 1    \\
5 & 1 & 1/6        & 1          & 1    \\ \bottomrule
\end{tabular}
\caption{Tabular of probability in Example. 1.}
\label{tab:example}
\end{table}
We first examine Theorem.~\ref{thm:01}, which says that any sample from $X\sim P(X|Y=y)$ satisfies $f(X) = y$. Observing this table, this is indeed true. As for $f(X)\neq y$, the posterior $P(X=x|Y=y)$ is $0$.\\
We then examine Theorem.~\ref{thm:02}, which says that sampling from $X\sim P(X)$ with $f(X)=y$ constraint is the same as sampling from $P(X|Y=y)$. We first identify the set such that $f(X)=y$. When $y=0$, this set is $\{0,1,2\}$. And when $y=1$, this set is $\{3,4,5\}$. And sampling from $p(X)$ with constrain $f(X)=y$ is equivalent to sampling from one of the two subset with probability $\propto p(X)$. And this probability is exactly $P(X|Y=y)$.
}

Additionally, Note that we only limit $f(.)$ to be deterministic and measureable. Thus, those theoretical results can be extended to other noise-free inversion-based image restoration, such as super-resolution. Despite inversion-based super-resolution has been studied for years empirically, their theoretical relationship with conditional model based super-resolution, and distortion-perception \citep{blau2018perception} trade-off is in general unknown. PULSE \citep{Menon2020PULSESP} are the pioneer of this area, and they justify their approach by "natural image manifold" assumption. And later works in inversion-based super-resolution follow theirs story. On the other hand, our Theorem.~\ref{thm:01},~\ref{thm:02} and Corollary.~\ref{thm:03} can be extended into image super-resolution as:
\begin{itemize}
    \item Conditional generative super-resolution also satisfies idempotence, that is, the up-sampled image can down-sample into low-resolution image.
    \item Inversion-based super-resolution is theoretically equvalient to conditional generative super-resolution.
    \item Inversion-based super-resolution satisfies the theoretical results of distortion-perception trade-off \citep{blau2018perception}, that is the MSE is at most double of best MSE.
\end{itemize}

We believe that our result provides non-trivial theoretical insights to inversion-based super-resolution community. For example, \citet{Menon2020PULSESP} claim that the advantage of inversion-based super-resolution over the conditional generative super-resolution is that inversion-based super-resolution "downscale correctly". However, with our theoretical result, we know that ideal conditional generative super-resolution also "downscale correctly". Currently they fail to achieve this due to implementing issue. Another example is that most inversion-based super-resolution \citep{Menon2020PULSESP} \citep{daras2021intermediate} report no MSE comparison with sota MSE super-resolution, as it is for sure that their MSE is worse and there seems to be no relationship between their MSE and sota MSE. However, with our theoretical result, we know that their MSE should be smaller than 2x sota MSE. And they should examine whether their MSE falls below 2x sota MSE.

\section{Additional Experiments}
\subsection{Additional Experiment Setup}
\label{app:setup}
All the experiments are implemented in Pytorch, and run in a computer with AMD EPYC 7742 CPU and Nvidia A100 GPU.

For FID evaluation, we adopt the same code as official implementation of OASIS \citep{Sushko2020YouON} in \url{https://github.com/boschresearch/OASIS}. To ensure we have enough number of sample for FID, we slice the images into $64\times 64$ non-overlapping patches. All the BD metrics are computed over bpp $0.15-0.45$, this is because HiFiC and CDC only have the bitrate over this range. And extrapolation beyond this range can cause inaccurate estimation.

For the re-trained version of HiFiC \citep{Mentzer2020HighFidelityGI}, we adopt a widely used Pytorch implementation in \url{https://github.com/Justin-Tan/high-fidelity-generative-compression}. For the re-trained version of ILLM \citep{muckley2023improving}, we adopt the official Pytorch implementation. We follow the original paper to train our models on FFHQ and ImageNet dataset.

We note that the official HiFiC \citep{Mentzer2020HighFidelityGI} is implemented in TensorFlow in \url{https://github.com/tensorflow/compression}. We test the official implementation and the Pytorch implementation, and the results are as Tab.~\ref{tab:hific}. In terms of BD-FID, the Pytorch version outperforms Tensorflow version in FFHQ, ImageNet and CLIC dataset, while it is outperformed by tf version in COCO. The overall conclusion is not impacted.
\begin{table}[htb]
\centering
\resizebox{\linewidth}{!}{
\setlength\tabcolsep{1 pt}
\begin{tabular}{@{}lllllllll@{}}
\toprule
\multirow{2}{*}{Method} & \multicolumn{2}{c}{FFHQ} & \multicolumn{2}{c}{ImageNet} & \multicolumn{2}{c}{COCO} & \multicolumn{2}{c}{CLIC} \\ \cmidrule(rr){2-3} \cmidrule(rr){4-5} \cmidrule(rr){6-7} \cmidrule(rr){8-9} 
 & BD-FID $\downarrow$ & BD-PSNR $\uparrow$ & BD-FID $\downarrow$ & BD-PSNR $\uparrow$ & BD-FID $\downarrow$ & BD-PSNR $\uparrow$ & BD-FID $\downarrow$ & BD-PSNR $\uparrow$ \\ \midrule
HiFiC (Pytorch) & -48.35 & -2.036 & -44.52 & -1.418 & -44.88 & -1.276 & -36.16 & -1.621 \\
HiFiC (Tensorflow) & -46.44 & -1.195 & -40.25 & -0.828 & -46.45 & -0.917 & -35.90 & -0.920 \\ \bottomrule
\end{tabular}
}
\caption{Comparison of Pytorch HiFiC and Tensorflow HiFiC.}
\label{tab:hific}
\end{table}

For Hyper \citep{balle2018variational}, we use the pre-trained model by CompressAI \citep{begaint2020compressai}, which is trained on Vimeo \citep{Xue2017VideoEW}. For ELIC \citep{he2022elic},  we use the pre-trained model in \url{https://github.com/VincentChandelier/ELiC-ReImplemetation}, which is trained on ImageNet training split. To the best of our knowledge, their training dataset has no overlap with our test set.

For BGP, we use the latest version BPG 0.9.8 in \url{https://bellard.org/bpg/}. For VTM, we use the latest version VTM 22.0 in \url{https://vcgit.hhi.fraunhofer.de/jvet/VVCSoftware_VTM/-/releases/VTM-22.0}. We convert the source to YUV444, encode with VTM and convert back to RGB to compute the metrics. The detailed command line for BPG and VTM are as follows:
\begin{lstlisting}
bpgenc -q {22,27,32,37,42,47} file -o filename.bpg

bpgdec filename.bpg -o file
\end{lstlisting}
\begin{lstlisting}
./EncoderApp -c encoder_intra_vtm.cfg -i filename.yuv \
-q {22,27,32,37,42,47} -o /dev/null -b filename.bin \
--SourceWidth=256 --SourceHeight=256 --FrameRate=1 \
--FramesToBeEncoded=1 --InputBitDepth=8 \
--InputChromaFormat=444 --ConformanceWindowMode=1

./DecoderApp -b filename.bin -o filename.yuv -d 8
\end{lstlisting}

For training of StyleGAN \citep{karras2019style}, we adopt the official implementation. The only difference is that our model does not have access to the test set of FFHQ. We directly use the pre-trained DDPM by \citep{chung2022diffusion}, which is trained without the test set.

For PULSE \citep{Menon2020PULSESP} inversion of StypleGAN, we follow the original paper to run spherical gradient descent with learning rate $0.4$. We run gradient ascent for $500$ steps. For ILO \citep{daras2021intermediate} inversion of StyleGAN, we follow the original paper to run spherical gradient descent on $4$ different layers of StyleGAN with learning rate $0.4$. We run gradient ascent for $200,200,100,100$ steps for each layer. For MCG \citep{chung2022improving} and DPS \citep{chung2022diffusion}, we follow the original paper to run gradient descent for $1000$ steps, with scale parameter $\zeta$ increasing as bitrate increases. We explain in detail about how to select $\zeta$ in next section.

\subsection{Additional Ablation Study}

\textbf{Why some inversion approaches fail} In Tab.~\ref{tab:abl1}, we show that some inversion approaches fail for image codec, even with our theoretical guarantee. This is because our theory relies on the assumption that the generative model perfectly learns the natural image distribution, while this is hard to achieve in practice. And we think this is where the gap lies. More specifically, GAN has better "precision" and worse "recall", while diffusion has a fair "precision" and fair "recall". The "precision" and "recall" is defined by \citet{Sajjadi2018AssessingGM}. Or to say, the distribution learned by GAN largely lies in the natural image distribution better, but many area of natural image distribution is not covered. The distribution learned by diffusion does not lies in natural image distribution as good as GAN, but it covers more area of natural image distribution. This phenomena is also observed and discussed by \citet{dhariwal2021diffusion} and \citet{ho2022classifier}. The "precision" is more important for sampling, but "recall" is more important for inversion. That is why the GAN based approach fails. The MCG fails probably because it is not well suited for non-linear problems.

\textbf{$\zeta$ Selection} The $\zeta$ parameter in DPS \citep{chung2022diffusion} is like the learning rate parameter in PULSE \citep{Menon2020PULSESP} and ILO \citep{daras2021intermediate}. It controls the strength of idempotence constraint. When $\zeta$ is too small, the perceptual reconstruction will deviate from the original image, and the MSE will go beyond the MSE upperbound. When $\zeta$ is too large, artefacts will dominate the image, and again, the MSE will go beyond the MSE upperbound.

\begin{figure}[thb]
\centering
    \includegraphics[width=0.8\linewidth]{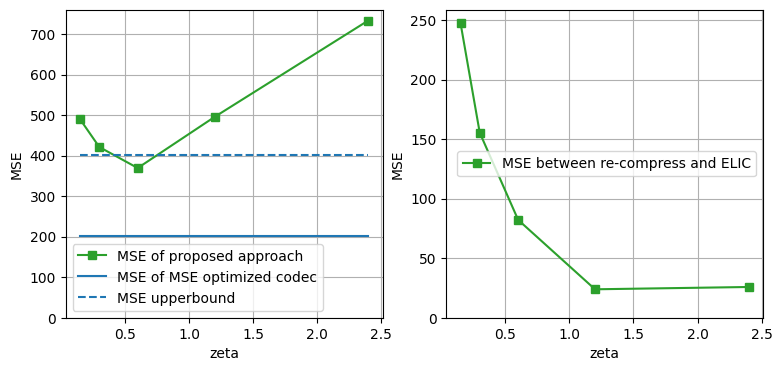}
\caption{Ablation study on $\text{MSE}-\zeta$ with ImageNet dataset and ELIC. Left: MSE between source and reconstruction. Right: MSE between ELIC and re-compression. }
\label{fig:abl}
\end{figure}

\begin{figure}[thb]
\centering
    \includegraphics[width=1.0\linewidth]{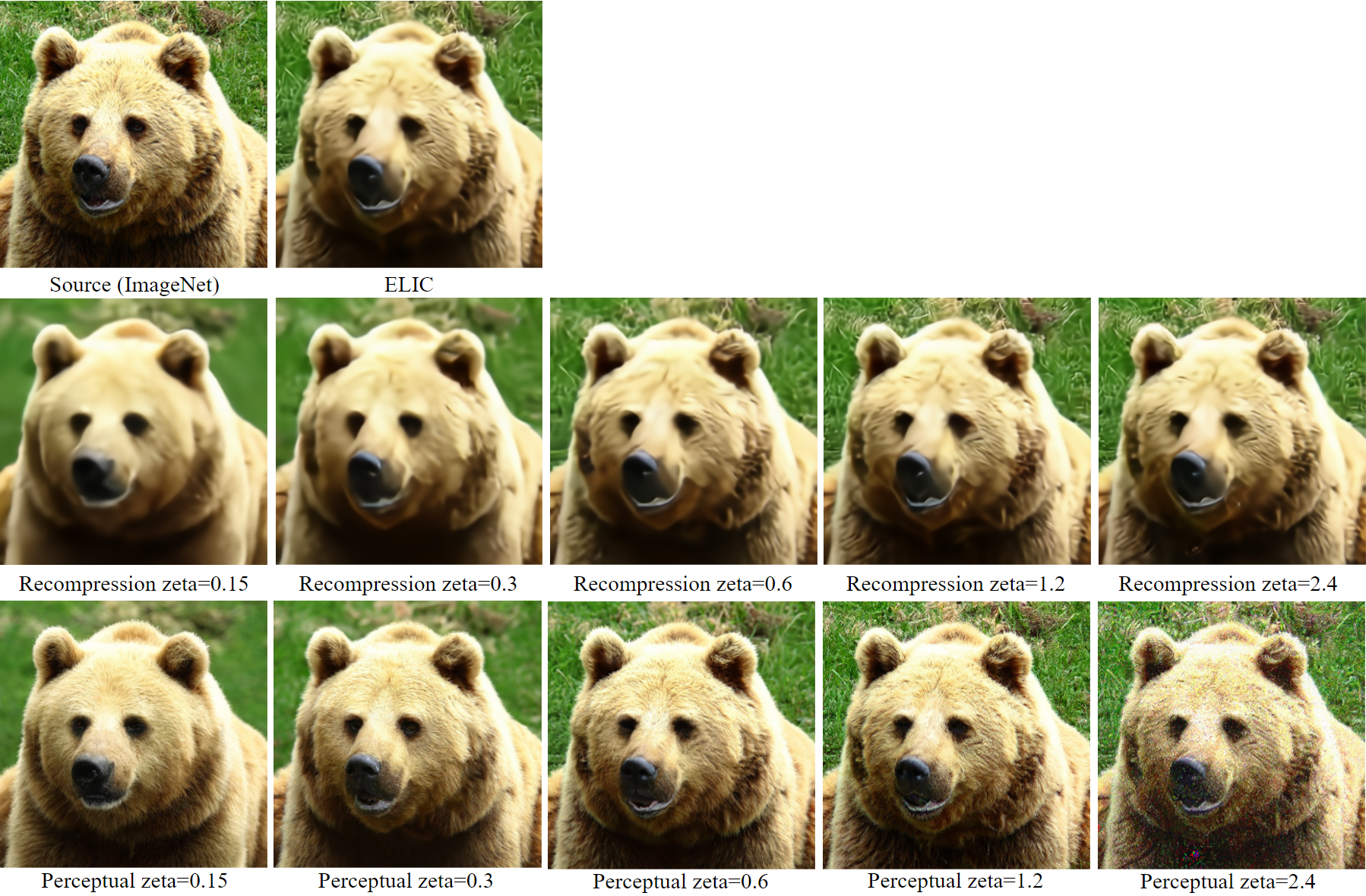}
\caption{Ablation study on adjusting $\zeta$ with ImageNet dataset and ELIC.}
\label{fig:abl2}
\end{figure}

\begin{figure}[thb]
\centering
    \includegraphics[width=0.64\linewidth]{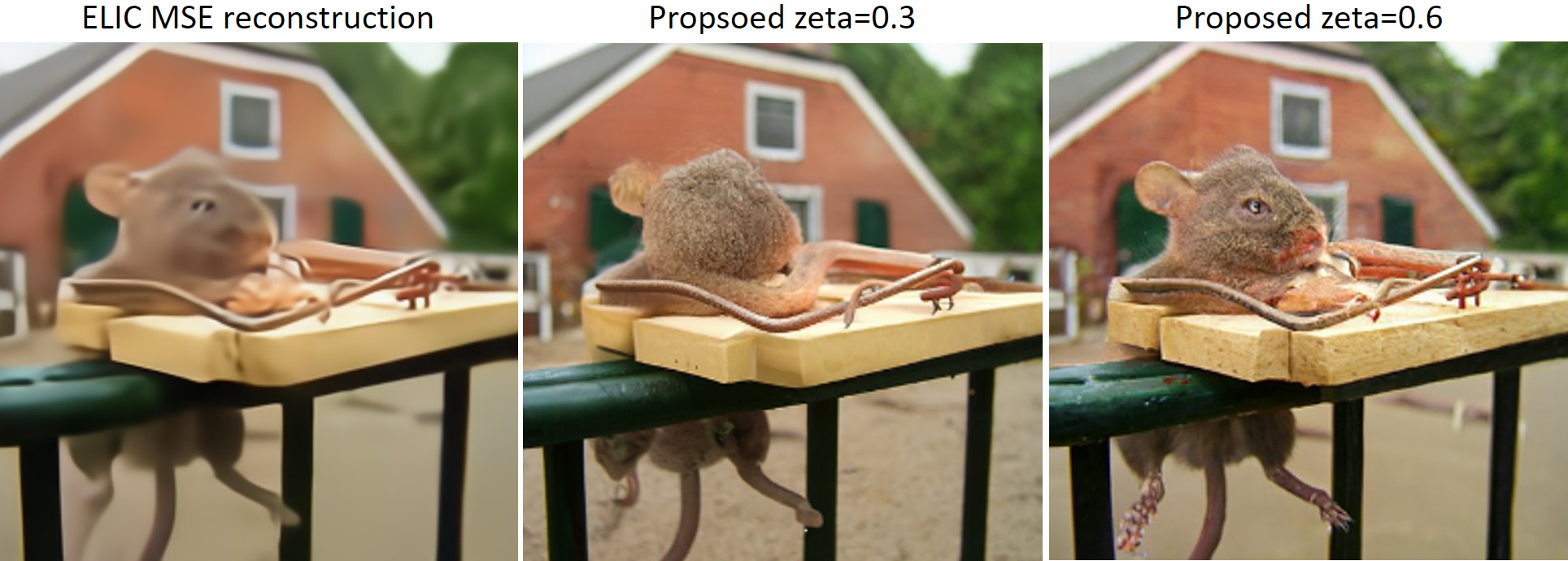}
\caption{An example of using improper $\zeta$ with ImageNet dataset and ELIC.}
\label{fig:bad}
\end{figure}

As shown in Fig.~\ref{fig:abl}, using a $\zeta$ too small or too large will cause the reconstruction MSE higher than the theoretical upperbound. While using a proper $\zeta$ achieves a MSE satisfies the MSE upperbound. On the other hand, the MSE of ELIC and re-compressed reconstruction indeed goes lower as $\zeta$ goes up. This is because this term is exactly what $\zeta$ is penalizing. Despite a large $\zeta$ strengthen the idempotence constraint, it can also push reconstruction off the learned natural image manifold. Another example is shown in Fig.~\ref{fig:bad}. It is shown that using insufficiently large $\zeta$ can sometimes lead to weird reconstruction.

To further understand this phenomena, we visualize the perceptual reconstruction and the re-compression results of those perceptual reconstruction. As shown in Fig.~\ref{fig:abl2}, when $\zeta=0.15,0.3$, the re-compression image does not look like the ELIC reconstruction. And this indicates that the idempotence constraint is not strong enough. While when $\zeta=0.6,1.2,2.4$, the re-compression image looks the same as ELIC reconstruction. However, when $\zeta=1.2,2.4$, the perceptual reconstruction is dominated by noise. And therefore its MSE still exceeds the MSE upperbound.

In practical implementation, we search $\zeta$ from $0.3$ and increase it when it is not large enough. For five bitrate of Hyper based model, we select $\zeta = \{0.3,0.6,1.2,1.6,1.6\}$ on FFHQ, $\{0.3,0.6,0.6,1.2,1.2\}$ on ImageNet, $\{0.6,0.6,0.6,1.2,1.2\}$ on COCO and $\{0.45,0.9,0.9,1.2,1.6\}$. For five bitrate of ELIC based model, we select $\zeta = \{0.3,0.6,1.2,1.6,1.6\}$ on FFHQ, $\{0.3,0.6,0.6,1.2,1.6\}$ on ImageNet, $\{0.6,0.6,0.6,1.2,1.2\}$ on COCO and $\{0.45,0.6,0.6,1.2,1.6\}$ on CLIC. Sometimes the $\zeta$ required for different images are different. Therefore, we also introduce an very simple adaptive $\zeta$ mechanics. More specifically, we evaluate the MSE between re-compressed perceptual reconstruction and MSE reconstruction. If it is larger than $32$, we multiple $\zeta$ by $1.5$ until $K$ time. We set $K=\{1,1,2,4,4\}$ for all methods and datasets. We note that it is possible to exhaustively search $\zeta$ for better perceptual quality. However, it will drastically slow down the testing.

\subsection{Additional Results}
\label{app:ar}

\textbf{Other Metrics} Besides the FID, PSNR and MSE, we also test our codec using other metrics such as Kernel Inception Distance (KID) \citep{Binkowski2018DemystifyingMG} and LPIPS \citep{Zhang2018TheUE}. Similar to FID and PSNR, we report BD metrics:\\
\begin{table}[thb]
\centering
\resizebox{\linewidth}{!}{
\setlength\tabcolsep{1 pt}
\begin{tabular}{@{}lllllllll@{}}
\toprule
\multirow{2}{*}{Method} & \multicolumn{2}{c}{FFHQ} & \multicolumn{2}{c}{ImageNet} & \multicolumn{2}{c}{COCO} & \multicolumn{2}{c}{CLIC} \\ \cmidrule(rr){2-3} \cmidrule(rr){4-5} \cmidrule(rr){6-7} \cmidrule(rr){8-9} 
 & BD-logKID $\downarrow$ & BD-LPIPS $\downarrow$ & BD-logKID $\downarrow$ & BD-LPIPS $\downarrow$ & BD-logKID $\downarrow$ & BD-LPIPS $\downarrow$ & BD-logKID $\downarrow$ & BD-LPIPS $\downarrow$ \\ \midrule
\multicolumn{9}{@{}l@{}}{\textit{MSE Baselines}} \\
Hyper & 0.000 & 0.000 & 0.000 & 0.000 & 0.000 & 0.000 & 0.000 & 0.000 \\
ELIC & -0.232 & -0.040 & -0.348 & -0.058 & -0.236 & -0.062 & -0.406 & -0.059 \\
BPG & 0.1506 & -0.010 & 0.027 & -0.010 & 0.126 & -0.012 & 0.039 & -0.008 \\ 
VTM & -0.232 & -0.031 & -0.298 & -0.048 & -0.216 & -0.050 & -2.049 & -0.048 \\ \midrule
\multicolumn{9}{@{}l@{}}{\textit{Conditional Generative Model-based}} \\
HiFiC & -3.132 & -0.108 & -2.274 & -0.172 & -2.049 & -0.172 & -1.925 & -0.148 \\
HiFiC$^*$ & -4.261 & -0.110 & -2.780 & -0.173 & - & - & - & - \\
Po-ELIC & -3.504 & -0.104 & -2.877 & -0.167 & -2.671 & -0.168 & -2.609 & -0.145\\
CDC & -2.072 & -0.060 & -1.968 & -0.099 & -1.978 & -0.101 & -2.122 & -0.084 \\
ILLM & -3.418 & -0.109 & -2.681 & -0.181 & -2.620 & -0.180 & -2.882 & -0.155 \\
ILLM$^*$ & -4.256 & -0.106 & -2.673 & -0.178 & - & - & - & - \\ \midrule
\multicolumn{9}{@{}l@{}}{\textit{Unconditional Generative Model-based}} \\
Proposed (Hyper) & \underline{-5.107} & -0.086 & \underline{-4.271} & -0.058 & \underline{-4.519} & -0.083 & \underline{-3.787} & -0.056 \\
Proposed (ELIC) & \textbf{-5.471} & -0.099 & \textbf{-5.694} & -0.106 & \textbf{-5.360} & -0.113 & \textbf{-4.046} & -0.079 \\ \bottomrule
\end{tabular}
}
\caption{Results on FFHQ, ImageNet, COCO and CLIC. $^*$: re-trained on corresponding dataset. \textbf{Bold}: lowest KID. \underline{Underline}: second lowest KID.}
\label{tab:rd2}
\end{table}
The result of KID has same trend as FID, which means that our approach is sota. While many other approaches outperform our approach in LPIPS. This result is expected, as:
\begin{itemize}
    \item KID is a divergence based metric. And our approach is optimized for divergence.
    \item LPIPS is a image-to-image distortion. By distortion-perception trade-off \citep{blau2018perception}, perception optimized approaches can not achieve SOTA LPIPS.
    \item All other perceptual codec (HiFiC, ILLM, CDC, Po-ELIC) except for ours use LPIPS as loss function during training.
    \item ILLM \citep{muckley2023improving} also achieves sota FID, KID and visual quality at that time, but its LPIPS is outperformed by an autoencoder trained with LPIPS.
\end{itemize}
% Please add the following required packages to your document preamble:
% \usepackage{booktabs}
\begin{table}[]
\centering
\begin{tabular}{@{}lllll@{}}
\toprule
                 & FFHQ & ImageNet & COCO & CLIC \\ \midrule
Hyper            & 0.0000 & 0.0000 & 0.0000 & 0.0000\\
ELIC             & 0.0098 & 0.0148 & 0.0146 & 0.0176 \\
HiFiC            & 0.0030 & 0.0064 & 0.0071 & 0.0058 \\
ILLM             & 0.0001 & 0.0041 & 0.0058 & 0.0042 \\
Proposed (Hyper) & -0.0065	 & -0.0399 & -0.0262 & -0.0236\\
Proposed (ELIC)  & 	-0.0029 & -0.0203 & -0.0135 & 	-0.0132\\ \bottomrule
\end{tabular}
\caption{MS-SSIM on FFHQ, ImageNet, COCO and CLIC.}
\label{tab:rd3}
\end{table}
In terms of MS-SSIM, ELIC $>$ HiFiC $>$ ILLM $>$ Hyper $>$ Proposed (ELIC) $>$ Proposed (Hyper). We are reluctant to use MS-SSIM as perceptual metric, as this result is obviously not aligned with visual quality, and should not be used when we considering divergence based perceptual quality as \citep{blau2018perception}, because:
\begin{itemize}
    \item MS-SSIM is an image to image distortion. By \citet{blau2018perception}, it is in odd with divergence based metrics such as FID, KID. Theoretically, there does not exist a codec that achieve optimal FID and MS-SSIM at the same time.
    \item In terms of MS-SSIM, ELIC, a mse optimzied codec, is the sota. This indicates that MS-SSIM correlates poorly with human perceptual quality.
    \item Similar result is also reported by \citet{muckley2023improving} Fig. 3, where the MS-SSIM of ILLM is not even as good as Hyper, which is a mse optimized codec. This indicates that MS-SSIM correlates poorly with human perceptual quality.
    \item In a perceptual codec competition CVPR CLIC 2021, the human perceptual quality is tested as final result and MS-SSIM, FID are evaluated. The 1st place of human perceptual test result among 23 teams "MIATL\_NG", also has lowest FID, which indicates FID correlates well with human perceptual quality. On the other hand, its MS-SSIM ranks 21st place among 23 teams, which indicates MS-SSIM correlates poorly with human perceptual quality.

\end{itemize}

\textbf{Rate Distortion Curve} We do not have enough room in the main text to show rate-distortion curve. We present them in Fig.~\ref{fig:rd} of appendix instead.

\begin{figure}[thb]
\centering
    \includegraphics[width=\linewidth]{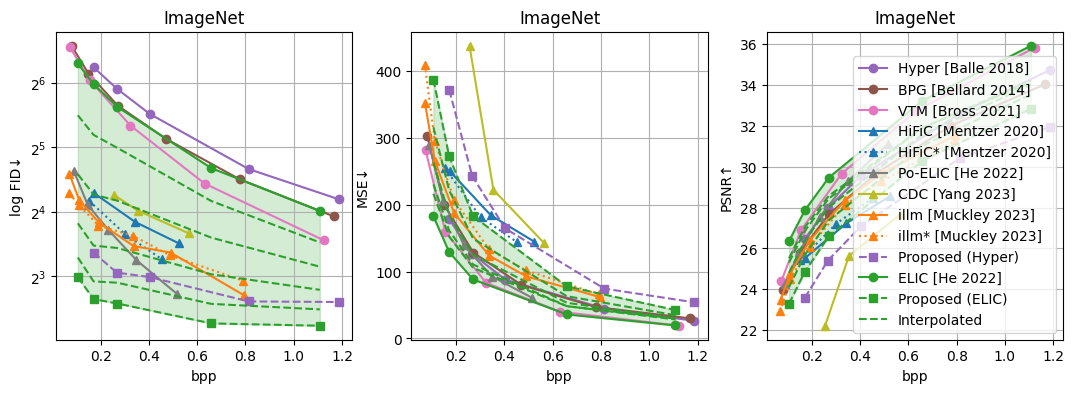}
    \includegraphics[width=0.5\linewidth]{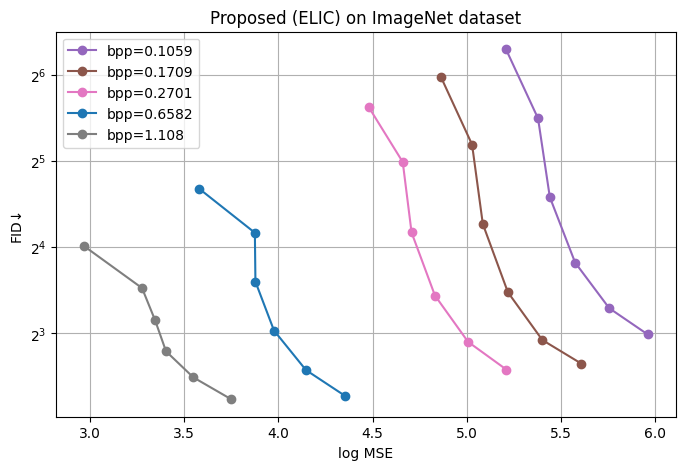}
\caption{Perception distortion trade-off by convex interpolation.}
\label{fig:pd}
\end{figure}
\textbf{Perception Distortion Trade-off} A couple of perception image codec have achieved perception-distortion trade-off with the same bitstream \citep{Iwai2020FidelityControllableEI,Agustsson2022MultiRealismIC,goose2023neural}. As our proposed approach shares the same bitstream as the MSE codec, we are already capable of decoding the optimal perceptual quality and optimal MSE at the same time. To achieve the intermediate points of perception-distortion trade-off, we can utilize the convex interpolation method proposed by \citet{Yan2022OptimallyCP}. More specifically, \citet{Yan2022OptimallyCP} prove that when divergence is measured in Wasserstein-2 (W2) distance, convex combination of perceptual and MSE reconstruction is optimal in distortion-perception trade-off sense. In our case, even if the divergence is not W2, convex combination is able to achieve a reasonable result. More specifically, we denote the MSE reconstruction as $\hat{X}_{\Delta}$, and perceptual reconstruction as $\hat{X}_p$. The convex combination is
\begin{gather}
    \hat{X}_{\alpha} = \alpha \hat{X}_p + (1-\alpha) \hat{X}_{\Delta}, \alpha\in[0.0,1.0].
\end{gather}

We evaluate this approach on ImageNet dataset with ELIC as base codec, as this setting covers the widest range of perception and distortion. In Fig.~\ref{fig:pd}, we can see that our interpolated codec successfully cover a wide range of perception-distortion trade-off, More specifically, our optimal perception codec has FID lower than ILLM \citep{muckley2023improving} but a MSE higher than ILLM. However, the interpolation with $\alpha = 0.8$ still has a FID lower than ILLM. But its MSE is already on par with ILLM. And the interpolation with $\alpha=0.6$ has a FID comparable to ILLM, but its MSE is obviously lower. This indicates that the interpolated version of our codec remains competitive in perception-distortion trade-off.

\begin{figure}[thb]
\centering
    \includegraphics[width=\linewidth]{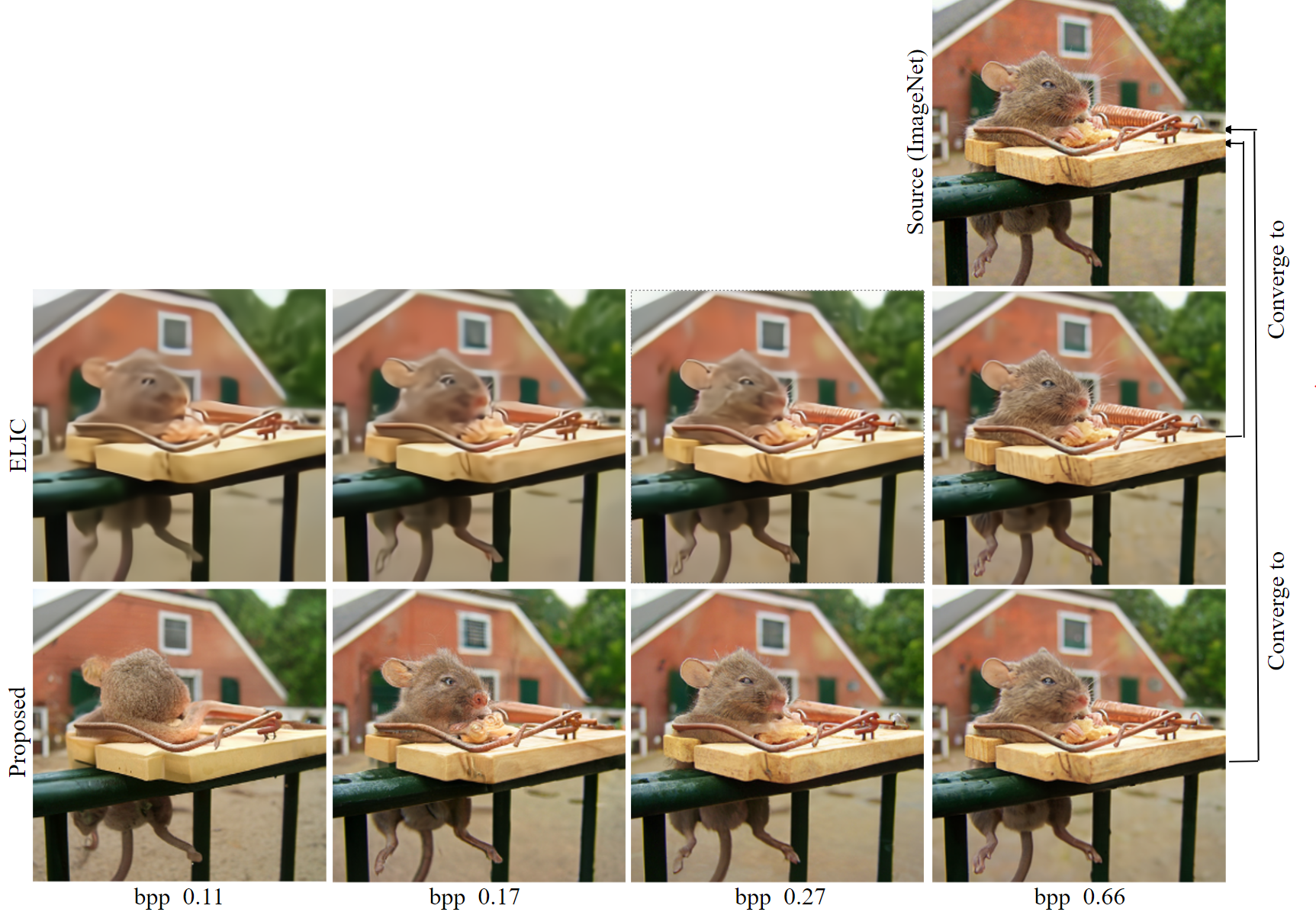}
\caption{The visual result of ELIC and our proposed approach as bitrate goes high gradually.}
\label{fig:qul_rate}
\end{figure}

\textbf{Visual Comparison of Different Bitrate} We visualize our reconstruction from low to high bitrate, along with ELIC as our bitrate is the same. It is interesting to find that both our approach and ELIC converge to the source image as bitrate grows high, but from different directions. For us, the visual quality does not differ much as bitrate goes high, while the reconstruction becomes more aligned with the source. For ELIC, the reconstruction grows from blurry to sharp as bitrate goes high. 

\textbf{Visual Comparison with Other Methods} We do not have enough room in the main text to compare all approaches visually. In main text Fig.~\ref{fig:qual}, we only compare our approach to HiFiC \citep{Mentzer2020HighFidelityGI} and ILLM \citep{muckley2023improving}. Thus, we put the visual results of other methods in Fig.~\ref{fig:qual_ffhq}-\ref{fig:qual_clic} here in Appendix.

\begin{figure}[thb]
\centering
    \includegraphics[width=\linewidth]{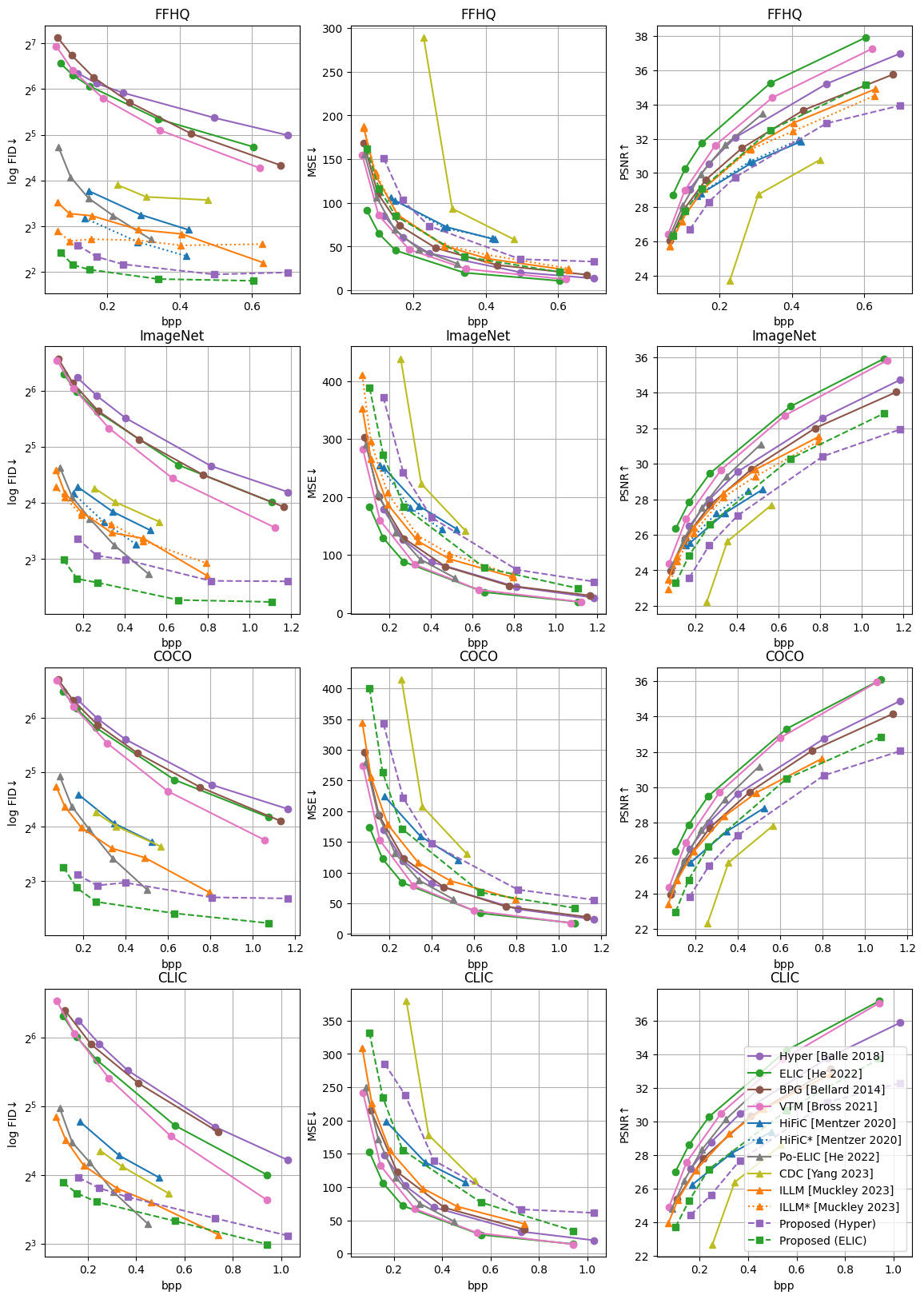}
\caption{The rate-distortion curve of different methods.}
\label{fig:rd}
\end{figure}

\begin{figure}[thb]
\centering
    \includegraphics[width=\linewidth]{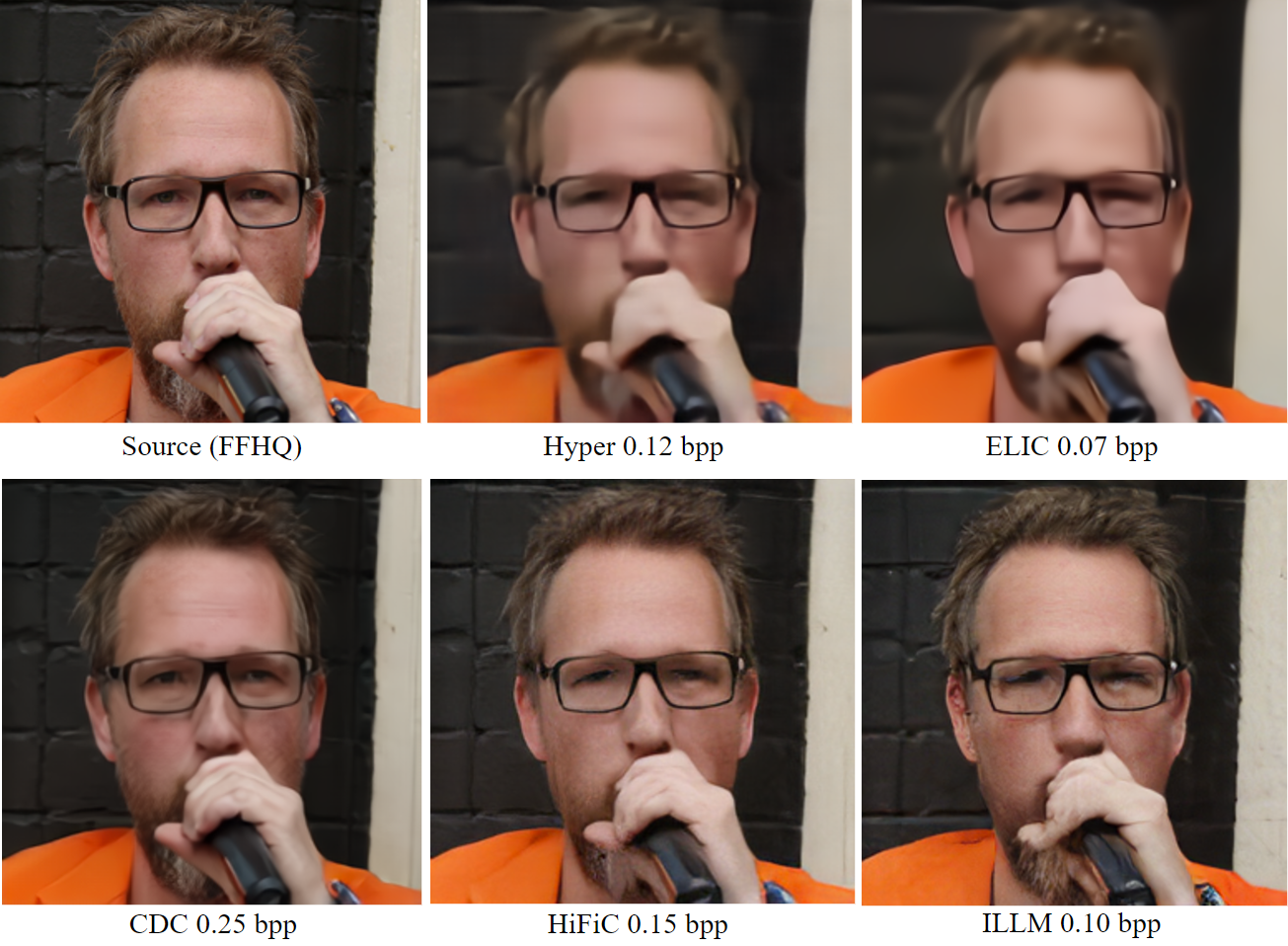}
    \includegraphics[width=\linewidth]{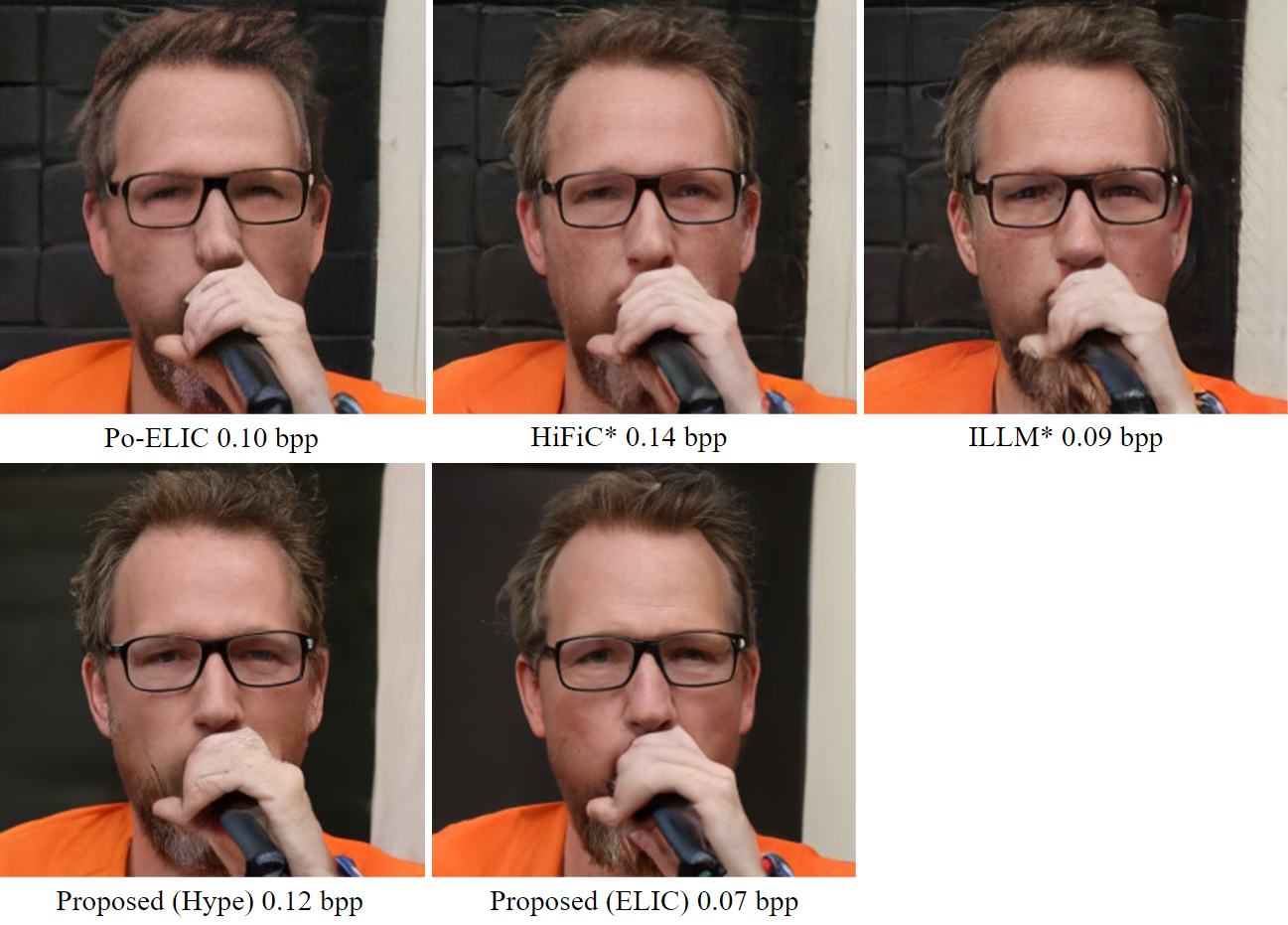}
\caption{A visual comparison of our proposed approach with other approaches.}
\label{fig:qual_ffhq}
\end{figure}
\begin{figure}[thb]
\centering
    \includegraphics[width=\linewidth]{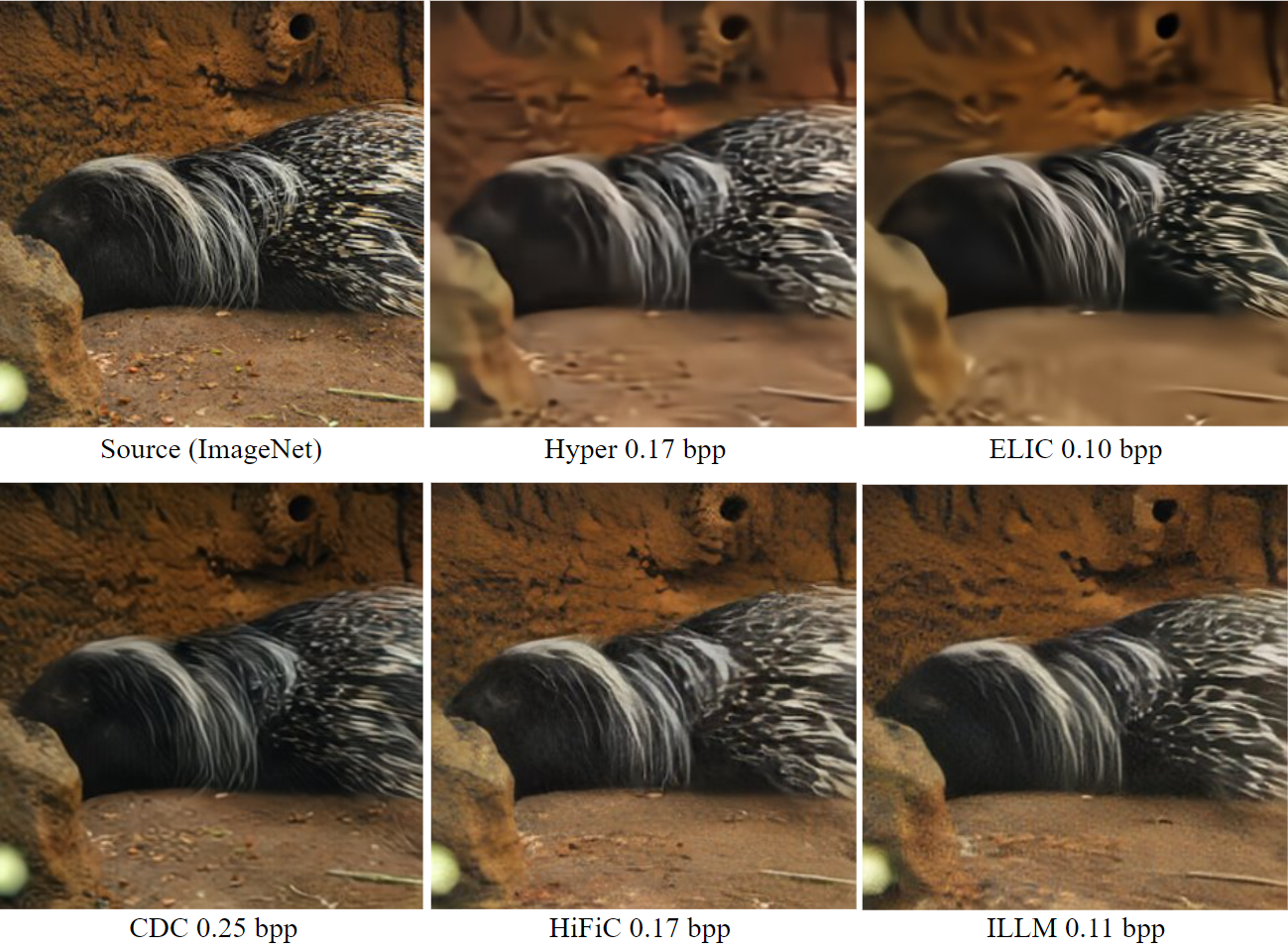}
    \includegraphics[width=\linewidth]{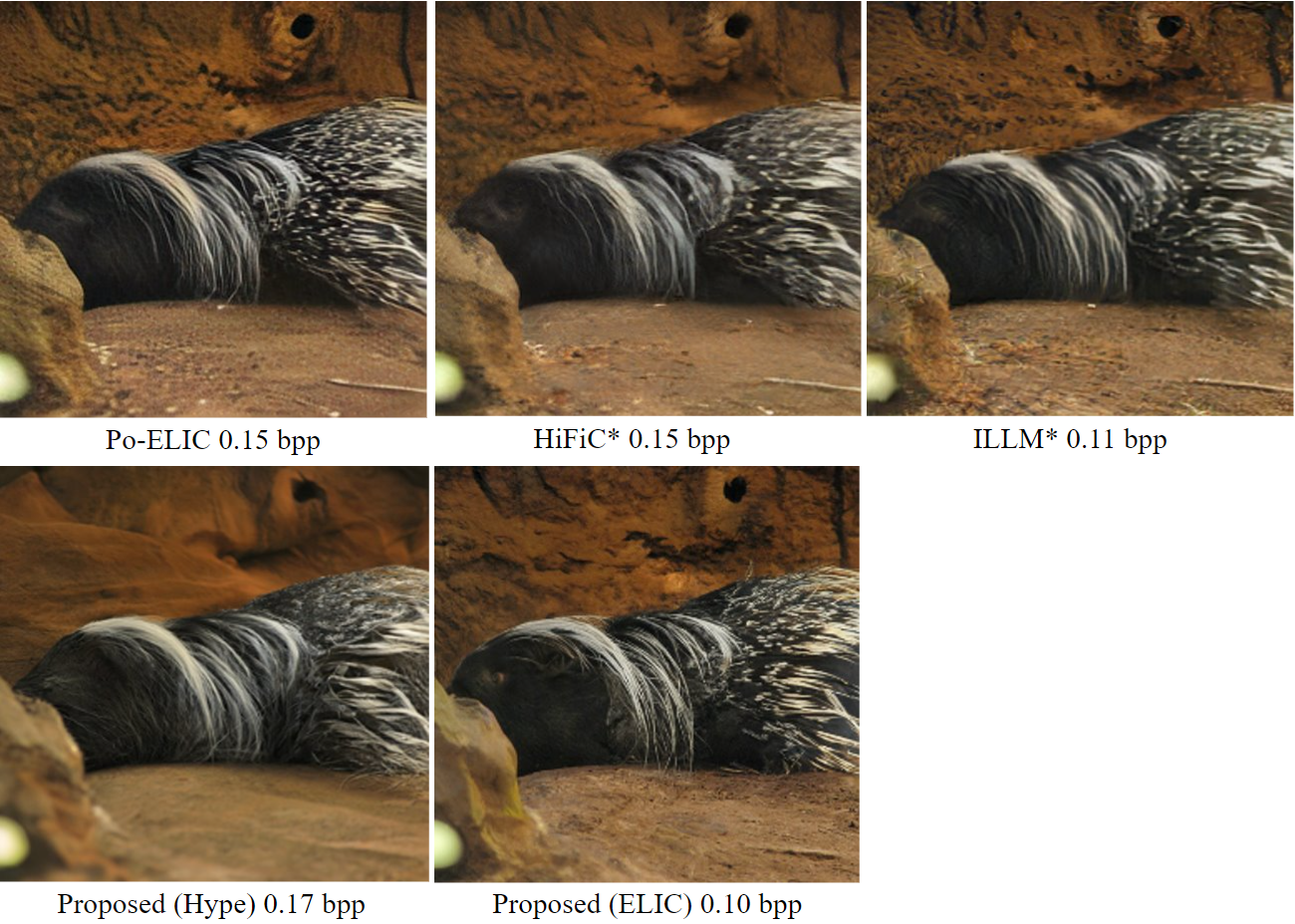}
\caption{A visual comparison of our proposed approach with other approaches.}
\label{fig:qual_imagenet}
\end{figure}
\begin{figure}[thb]
\centering
    \includegraphics[width=\linewidth]{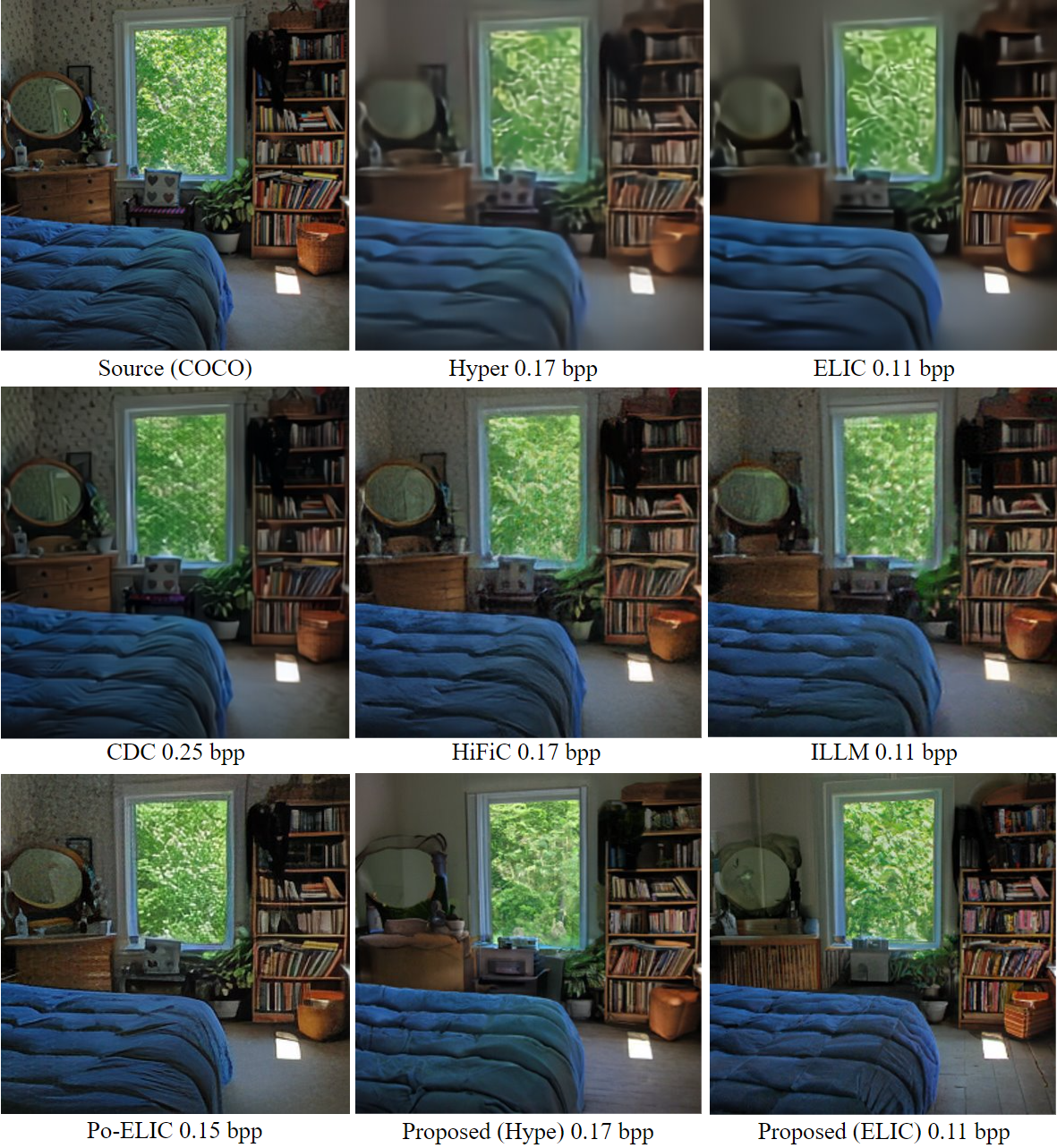}
\caption{A visual comparison of our proposed approach with other approaches.}
\label{fig:qual_coco}
\end{figure}
\begin{figure}[thb]
\centering
    \includegraphics[width=\linewidth]{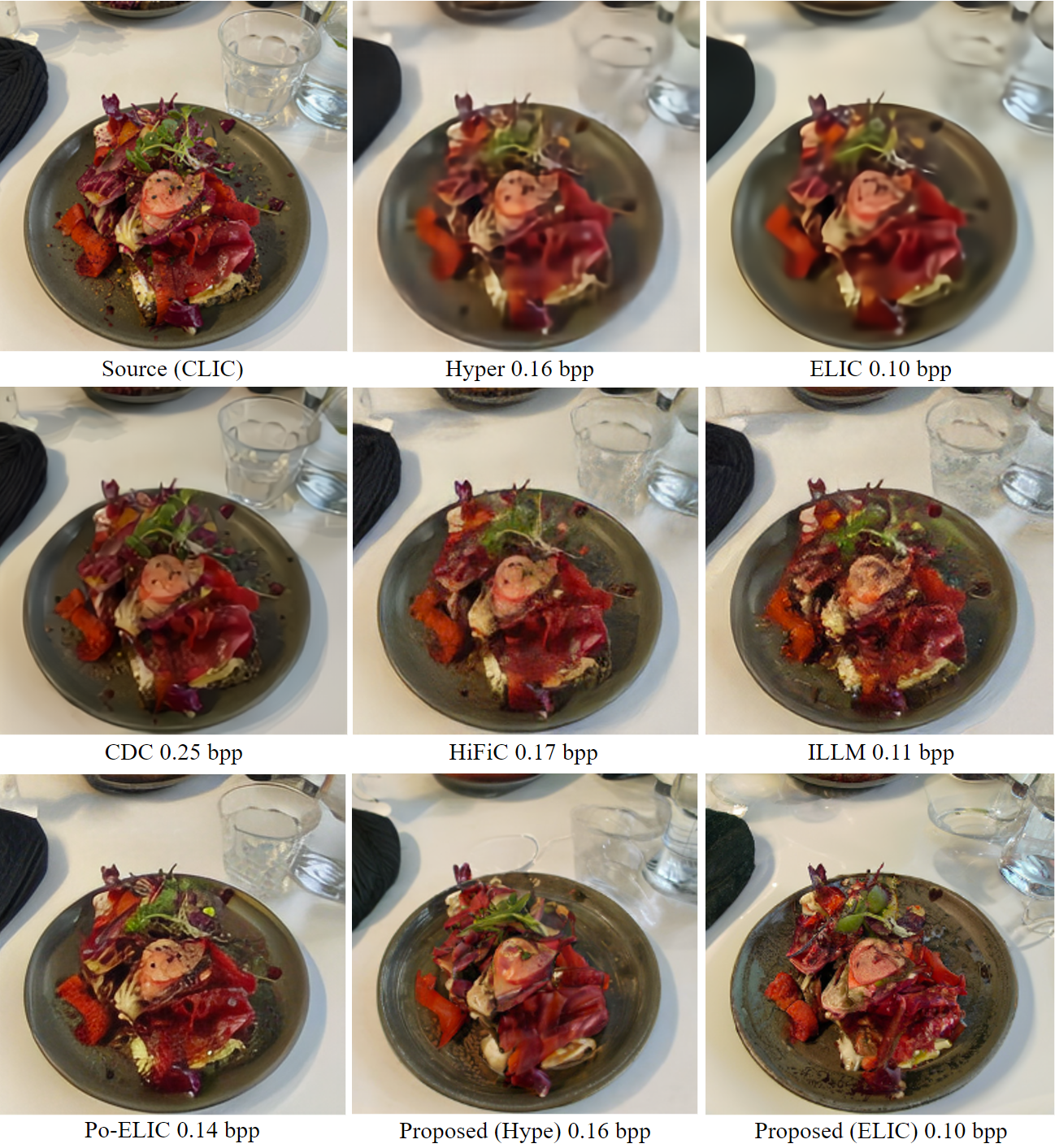}
\caption{A visual comparison of our proposed approach with other approaches.}
\label{fig:qual_clic}
\end{figure}

\textbf{Reconstruction Diversity}
In main text Fig.~\ref{fig:div}, we only present three alternative reconstructions that is used to compute standard deviation. In Fig.~\ref{fig:div_app1}, we present additional 12 reconstructions to show that our reconstruction has good diversity.

\begin{figure}[thb]
\centering
    \includegraphics[width=\linewidth]{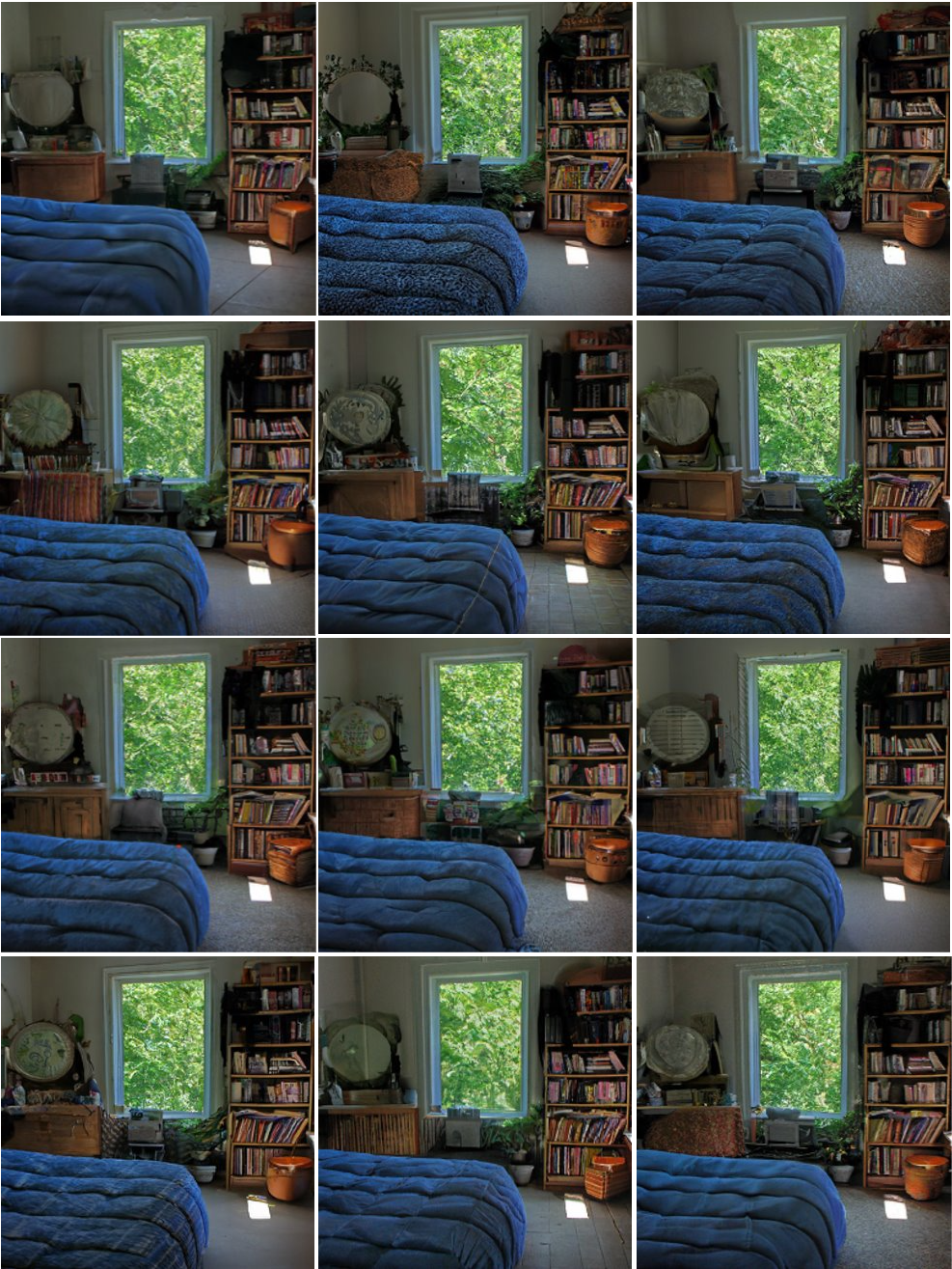}
\caption{Reconstruction diversity of proposed approach. }
\label{fig:div_app1}
\end{figure}

\end{document}